\newtheorem{example}{Example}
\newtheorem{theorem}{Theorem}
\newtheorem{corollary}{Corollary}[theorem]
\DeclareMathOperator{\myre}{Re}
\DeclareMathOperator{\myim}{Im}
\newcommand{\Sin}[1]{\sin\left(#1\right)}
\newcommand{\re}[1]{\myre\left(#1\right)}
\newcommand{\im}[1]{\myim\left(#1\right)}
\newcommand{\pt}{\mathcal{P}_\mathbb{T}}
\newcommand{\lt}{\mathcal{L}_\mathbb{T}}
\newcommand{\ts}{\mathcal{T}_\text{s}}
\newcommand{\krnp}{\mathcal{K}_{\mathcal{D},n_p}}
\newcommand{\range}[2]{\llbracket {#1},{#2} \rrbracket}
\title{Practical Considerations in Direct Detection Under Tukey Signalling}
\author{Amir Tasbihi~\IEEEmembership{Graduate Student
Member, IEEE} and Frank R.
Kschischang~\IEEEmembership{Fellow, IEEE}\thanks{Submitted on March 3rd, 2022
to \emph{J.~Lightw.~Technol.}, revised on October 5th, 2022, accepted on
December 1st, 2022. The authors are with the Edward S. Rogers Sr.\ Dept.\ of
Electrical \& Computer Engineering, University of Toronto, Toronto, ON M5S 3G4,
Canada.  Email:
\texttt{\{tasbihi,frank\}@ece.utoronto.ca.}}}
\begin{document}
\maketitle

\begin{abstract}
The deliberate introduction of controlled intersymbol interference (ISI) in
Tukey signalling enables the recovery of signal amplitude and (in part) signal
phase under direct detection, giving rise to significant data rate improvements
compared to intensity modulation with direct detection (IMDD).  The use of an
integrate-and-dump detector makes precise waveform shaping unnecessary, thereby
equipping the scheme with a high degree of robustness to nonlinear signal
distortions introduced by practical modulators.  Signal sequences drawn from
star quadrature amplitude modulation (SQAM) formats admit an efficient trellis
description that facilitates codebook design and low-complexity near
maximum-likelihood sequence detection in the presence of both shot noise and
thermal noise.  Under the practical (though suboptimal) allocation of a $50$\%
duty cycle between ISI-free and ISI-present signalling segments, at a symbol
rate of $50~$Gbaud and a launch power of $-10~$dBm the Tukey scheme has a
maximum theoretically achievable throughput of $200~$Gb/s with an $(8,4)$-SQAM
constellation, while an IMDD scheme achieves about $145~$Gb/s using PAM-$8$.
Note that the two mentioned constellations have the same number of magnitude
levels and the difference in throughput is resulting from exploiting phase
information under using a complex-valued signal constellation. 
\end{abstract}
\begin{IEEEkeywords}
Direct detection, modulator nonlinearity, short-haul fiber-optic
communication, intersymbol interference, Tukey window.
\end{IEEEkeywords}

\begin{section}{Introduction}\label{sec:introduction}
\IEEEPARstart{T}{ukey} waveforms are time-limited signals that allow
amplitude and (to some extent) phase recovery of complex-valued symbols
under direction detection by the deliberate introduction of inter-symbol
interference (ISI)~\cite{tukey}, in a manner that is reminiscent of
\textit{partial-response signalling} schemes dating back to the early
1960s~\cite{partial1,partial2}.  Under Tukey signalling, ISI is
controlled so that only adjacent symbols interfere, resulting in an
alternation between ISI-free and ISI-present signalling segments at the
receiver, where detection is accomplished by a low-complexity
integrate-and-dump architecture.

Tukey signalling with direct detection is an alternative to
intensity-modulation with direct detection (IMDD), which encodes only
the intensity, \textit{i.e.}, squared magnitude, of the transmitted
waveform by transmitting only positive real symbols.  In contrast to
IMDD, Tukey signalling with direct detection allows information to be
encoded in the phase, not only the intensity, of the transmitted
symbols;  however, this increased capability necessitates implementation
of an in-phase/quadrature (IQ) modulator at the transmitter and the use
of two (rather than one) analog-to-digital (A/D) converters (each
operating at the symbol rate) at the receiver.  The requirement of having
at least two real-valued samples per complex-valued symbol is inevitable
in all schemes that extract phase (in addition to magnitude).

Another class of communication schemes that can exploit phase under
direct detection is based on the so-called Kramers-Kronig (KK) (or
Hilbert transform) relationship between the phase and the logarithm of
the magnitude of a minimum-phase complex-valued
signal~\cite{kk1,kk2,kk3}.  The use of this relationship in
communication systems dates back to the early
1960s~\cite{old_kk1,old_kk2,old_kk3,old_kk4,old_kk5,old_kk6}, where it
was used to extract phase information from an envelope detector.  To
satisfy the minimum-phase constraint, KK-based schemes must add a
non-information-bearing tone to the complex-valued waveform (either at
the transmitter or at the receiver), resulting in a substantial
inefficiency in overall transceiver power consumption.  Furthermore,
because they must perform bandwidth-broadening nonlinear operations, KK
receivers often demand an over-sampling of about three times the minimum
required sampling rate, \textit{i.e.}, six times the symbol rate.
Although there are KK schemes that enable phase recovery without such
oversampling, these require a very high carrier-to-signal power
ratio~\cite{kk3}. 

In this paper, we address a number of practical concerns that arise with
Tukey signalling under direct detection.  In particular, 
\begin{itemize}
\item we consider the influence of the nonlinearity of the IQ modulator
(which was idealized in~\cite{tukey}) and show that the scheme is quite
robust to modulator imperfections;
\item we fix the duty-cycle between ISI-free and ISI-present signalling
segments to $50\%$ (unlike in~\cite{tukey}, where it was allowed to
range to values as low as $10\%$, which leads to unrealistically short
integration intervals in high baud-rate systems);
\item we introduce a low-complexity near maximum-likelihood (ML)
trellis-based decoding algorithm in the presence of both shot noise and
thermal noise encountered at the output of a p-i-n photodiode under
practical parameter settings (replacing the brute-force search decoding
and avalanche photodiode analysis of \cite{tukey});
\item we consider a wider range of constellation sizes compared
with~\cite{tukey}.
\end{itemize}

The rest of the paper is organized as follows.  Sec.~\ref{sec:system_model}
describes the system model for use in the C band where chromatic
dispersion is significant; in particular,
Sec.~\ref{subsec:IQ_modulator} addresses the nonlinearity of the IQ modulator.
Sec.~\ref{sec:trellis} shows how trellis diagrams can be used for codebook
design and decoding.  While typically trellises are associated with a
\emph{given} codebook, in Sec.~\ref{subsec:sld_trellis} trellises are used to
\emph{find} codebooks composed of symbols drawn from the class of signal
constellations described in Sec.~\ref{subsec:sqam}.
Sec.~\ref{subsec:trellis_decoding} provides a branch-metric function that
enables near-ML trellis-based decoding using the Viterbi algorithm.
Sec.~\ref{sec:simulation} provides extensive numerical simulation results,
while Sec.~\ref{sec:compare} compares Tukey signalling under direct detection
with IMDD.  The system model of Sec.~\ref{sec:system_model}
requires link-distance knowledge to precompensate for chromatic dispersion. In
Sec.~\ref{sec:Oband}, we remove this requirement by operating in the O band,
instead of the C band, leaving residual chromatic dispersion
uncompensated and making the transmitter agnostic
to the transmission-link length.
Finally, Sec.~\ref{sec:discussion} provides various concluding remarks.

Throughout this paper, the integers, real numbers, and complex numbers
are denoted, respectively, by $\mathbb{Z}$, $\mathbb{R}$, and
$\mathbb{C}$.  The set of real-valued and complex-valued functions over
$\mathbb{R}$ are denoted as $\mathbb{R}^{\mathbb{R}}$ and
$\mathbb{C}^{\mathbb{R}}$, respectively.  For any $x\in\mathbb{R}$,
$\mathbb{R}^{>x}$ denotes the set of real numbers strictly greater than
$x$.  The sets $\mathbb{R}^{\geq x}$, $\mathbb{Z}^{>x}$, and
$\mathbb{Z}^{\geq x}$ are defined similarly.  For any $a,b
\in\mathbb{Z}$ with $a \leq b$, let $\range{a}{b} =\{a, a+1, \ldots,
b\}$.  The cardinality of a set $\mathcal{S}$ is denoted by
$|\mathcal{S}|$; thus, \textit{e.g.}, $|\range{0}{m-1}|=m$.  For any
$r\in\mathbb{R}$, $\lceil r\rceil$ and $\lfloor r \rfloor$ denote, respectively, the smallest integer greater
than or equal to $r$ and the largest integer less than or equal to $r$.  For a complex number $w\in\mathbb{C}$, $|w|$,
$\arg(w)$, and $w^\ast$ denote the magnitude, phase, and complex
conjugate of $w$, respectively; furthermore, we assume that
$\arg(w)\in[0,2\pi)$ and $i=\sqrt{-1}$.  The Fourier transform of a
complex-valued function $x(t)$ is denoted as $\mathcal{F}[x(t)]$; the
inverse Fourier transform of $X(f)$ is denoted as
$\mathcal{F}^{-1}[X(f)]$.  For a random variable $X$,
$X\sim\mathcal{N}(\mu,\sigma^2)$ denotes that $X$ has a Gaussian
distribution with mean $\mu$ and variance $\sigma^2$.  Finally, for any
$k \in\mathbb{Z}^{>0}$, $\bm{I}_k$ denotes the $k \times k$ identity
matrix, while for any $k$ and $\ell\in\mathbb{Z}^{>0}$,
$\mathds{1}_{k\times\ell}$ denotes the all-one matrix of size
$k\times\ell$.
\end{section}

\begin{section}{System Model}\label{sec:system_model}

In this section, we describe the system model, as shown in
Fig.~\ref{fig:system_model}.  For simplicity of computations and ease of
reading, we assume transmission over a single polarization; however,
with an obvious modification, the scheme can be adapted to work in a
dual-polarized system.

\begin{figure}
\centering
\includegraphics[scale=0.6666]{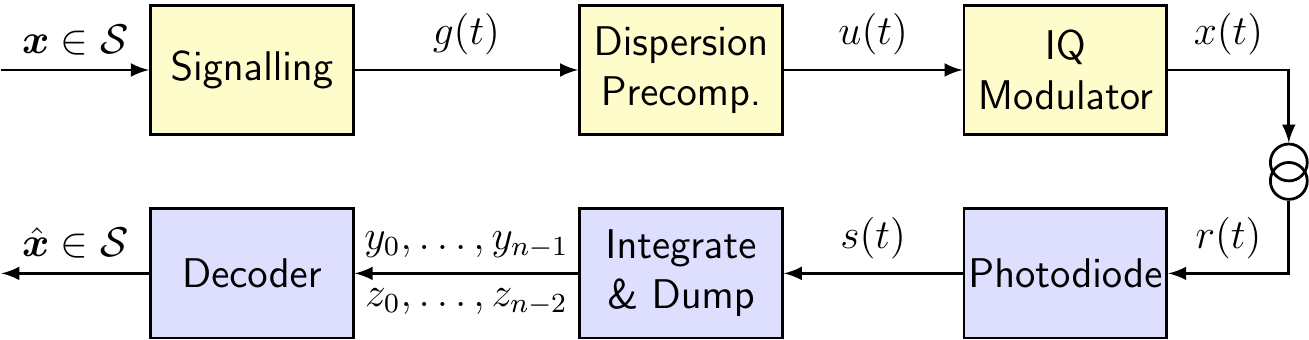}
\caption{The system model for operation in the C band. For the O band system model, see Fig.~\ref{fig:system_model_Oband}.}
\label{fig:system_model}
\end{figure}

\begin{subsection}{The Transmission Medium}
\label{subsec:transmission_medium}

We assume data transmission over a standard single-mode fiber (SSMF) at a
wavelength in the C band, \textit{i.e.}, close to $1550~$nm, where the fiber
has the lowest loss, \textit{i.e.}, about $0.2~\text{dB}\cdot\text{km}^{-1}$.
The lower loss comes at the expense of a higher chromatic dispersion compared
to operating at the zero-dispersion wavelength of SSMF, \textit{i.e.}, about
$1310~$nm in the O band. 

A key application of direct detection is in short-haul data
transmission, \textit{e.g.,} intra data-center communication.
Therefore, power loss and chromatic dispersion are the only fiber
imperfections that we take into account. Other defects of the channel,
\textit{e.g.}, Kerr effect or polarization mode dispersion (for a
dual-polarized transmission), are not considered.  Accordingly, if the
waveform $x(t)\in\mathbb{C}^\mathbb{R}$ is launched into the fiber, the
waveform received at the fiber output is modelled as~\cite[Sec.~4.3.3]{papen2019}
\begin{equation*}
r(t)=\mathcal{F}^{-1}\left[X(f)e^{-\varrho\frac{L}{2}}e^{-i\beta_0L}e^{-2\pi f\beta_1L}e^{-i2\beta_2L\pi^2f^2}\right],
\end{equation*}
where $X(f)=\mathcal{F}[x(t)]$, $L\in\mathbb{R}^{>0}$ is the fiber
length, $\varrho\in\mathbb{R}^{>0}$ is the loss factor and
$\beta_2\in\mathbb{R}$ is the group-velocity dispersion parameter.  At
the wavelength of $1550~$nm, $\varrho=0.046~\text{km}^{-1}$ (equivalent
to a loss of $0.2~\text{dB}\cdot\text{km}^{-1}$) and
$\beta_2=-2.167\times 10^{-23}~\text{s}^2\cdot\text{km}^{-1}$. 
At the operating wavelength $\lambda$, we have 
$\beta_0=2\pi/\lambda$.
Furthermore,
$\beta_1$ is the group delay per unit of length~\cite[Sec.~4.3]{papen2019}.
\end{subsection}

\begin{subsection}{IQ Modulator}
\label{subsec:IQ_modulator}

The IQ modulator comprises two Mach-Zehnder modulators (MZMs) operating
in the push-pull regime. For a modulating waveform
$v(t)\in\mathbb{R}^{\mathbb{R}}$, the output of an MZM in this regime is
$E_\text{out}(t)=E_\text{in}\Sin{\kappa v(t)}$, where $E_\text{in}$ is
an unmodulated continuous-wave narrow-band optical input and
$\kappa\in\mathbb{R}$ is a constant depending on the operating
wavelength and physical properties of the MZM~\cite[Sec.~7.5.3]{papen2019}.  In this paper, we
consider a baseband model; thus, we assume that
$E_\text{in}\in\mathbb{R}^{>0}$.  The IQ modulator uses separate MZMs
for the in-phase and the quadrature components, producing the optical
waveform~\cite[Sec.~7.5.3]{papen2019} 
\begin{equation}
x(t)=E_\text{in}\Big(\Sin{\kappa\re{u(t)}}+i\Sin{\kappa\im{u(t)}}\Big)
\label{eq:driver_signal}
\end{equation}
in response to the complex-valued electrical waveform
$u(t)\in\mathbb{C}^\mathbb{R}$.
\end{subsection}

\begin{subsection}{Signalling}
\label{subsec:signalling}

For any $\beta\in[0,1]$ the Tukey (or time-domain raised
cosine~\cite{munich_tukey}) waveform $w_\beta(t)$ is defined
as~\cite{tukey} 
\[
w_{\beta}(t)\triangleq
\begin{cases}
\alpha_\beta, & \text{if }|t|\leq\frac{(1-\beta)}{2};\\
\frac{\alpha_\beta}{2}\left(1-
\sin\left(\frac{\pi(2|t|-1)}{2\beta}\right)\right),&
\text{if }\left||t|-\frac{1}{2}\right|\leq\frac{\beta}{2};\\
0, & \text{otherwise},
\end{cases}
\]
where $\alpha_\beta\triangleq\frac{2}{\sqrt{4-\beta}}$.  For a
$T\in\mathbb{R}^{>0}$ and a block length $n\in\mathbb{Z}^{>1}$, the
signalling block in Fig.~\ref{fig:system_model} accepts $n$ complex
numbers $x_0,\ldots,x_{n-1}\in\mathbb{C}$ as its input and produces
\begin{equation}
g(t)=\sum_{k\in \range{0}{n-1}} x_kw(t-kT),
\label{eq:g}
\end{equation}
where the signalling waveform $w(t)$ is a scaled, dilated Tukey
waveform; in particular, $w(t)=\frac{1}{\sqrt{T}}w_\beta(\frac{t}{T})$
for some $\beta\in[0,1]$.
\end{subsection}

\begin{subsection}{Dispersion Precompensation}
\label{subsec:dispersion_precompensation}

Chromatic dispersion is precompensated in the electrical domain by a
filter with frequency response $ H(f)=e^{i2\beta_2 L\pi^2f^2}$.  Note
that this linear precompensation does not take into account the
nonlinearity of the IQ modulator and therefore it does not perform ideal
precompensation.  Nevertheless, when the power of the modulating
waveform, $u(t)$, is relatively small, the IQ modulator operates almost
linearly; thus, the dispersion precompensating filter with frequency
response $H(f)$ compensates the chromatic dispersion of the fiber to a
large degree.
\end{subsection}

\begin{subsection}{Photodiode and Noise}
\label{subsec:photodiode}

The output $s(t)\in\mathbb{R}^\mathbb{R}$ of the photodiode, in response
to the input waveform $r(t)\in\mathbb{C}^\mathbb{R}$, is given as
\begin{equation*}
s(t)=\mathfrak{R}\,|r(t)|^2 + |r(t)|n_\text{sh}(t)+n_\text{th}(t), 
\end{equation*}
where $\mathfrak{R}$ is the photodiode responsivity, and where
$n_\text{sh}(\cdot)$ and $n_\text{th}(\cdot)$ are zero-mean white
Gaussian random processes with constant two-sided power spectral
densities $\sigma_\text{sh}^2$ and $\sigma_\text{th}^2$, respectively.
The two terms, $|r(t)|n_\text{sh}(t)$ and $n_\text{th}(t)$, are referred
to as shot noise and thermal noise, respectively.
\end{subsection}

\begin{subsection}{Integrate \& Dump}
\label{subsec:integrate_and_dump}

Note that $g(t)$ depends only on $x_k$, $k\in\range{0}{n-1}$,  whenever
$\left|t-kT\right|\leq\frac{(1-\beta)T}{2}$ and depends only on $x_\ell$
and $x_{\ell+1}$, $\ell\in\range{0}{n-2}$, whenever
$\left|t-(\ell+\frac{1}{2})T\right|\leq\frac{1}{2}\beta T$.  Thus, as
in~\cite{tukey}, we define the $k^\text{th}$ ISI-free interval as
\begin{equation*}
\mathcal{Y}_k\triangleq \{ t \in \mathbb{R} \colon | t - kT | \leq (1-\beta)T/2 \}
\end{equation*}
and the $\ell^\text{th}$ ISI-present interval as 
\begin{equation*}
\mathcal{Z}_\ell\triangleq \{ t \in \mathbb{R} \colon | t - \ell T - T/2 | < \beta T/2 \}.
\end{equation*}
The integrate-and-dump unit accepts $s(t)$ as its input and produces
\begin{equation}
y_k=\int_{\mathcal{Y}_k}s(t)~\text{d}t \quad \text{ and } \quad
z_\ell=\int_{\mathcal{Z}_\ell}s(t)~\text{d}t
\label{eq:ykzl}
\end{equation}
for $k\in\range{0}{n-1}$ and $\ell\in\range{0}{n-2}$. 

As we do not compensate for the nonlinearity of the IQ modulator,
determining the exact distribution of $y_k$ and $z_\ell$ given the
transmitted symbols, $x_0,\ldots,x_{n-1}$, is not straightforward.
Thus, the implementation of a true ML receiver seems intractable.
Instead, we use
\begin{equation}
x(t)\simeq E_\text{in}\kappa u(t),
\label{eq:aprx_driver}
\end{equation}
as an approximation for (\ref{eq:driver_signal}) at the decoder, which
becomes increasingly accurate for low-power modulating waveforms.  Note
that this approximation is applied at the receiver, and not at the
transmitter, to simplify computations. In all simulation results
reported in Sec.~\ref{sec:simulation}, the nonlinearity of the IQ
modulator, as given in (\ref{eq:driver_signal}), has been properly taken
into account when simulating the transmitter.

Using (\ref{eq:aprx_driver}), one may approximate $y_k$ in (\ref{eq:ykzl})
as
\begin{equation}
y_k\simeq\mathfrak{R}\,\overline{y_k}+\sqrt{\overline{y_k}}n_k+m_k,
\label{eq:yk_aprx}
\end{equation}
where 
\begin{equation}
	\overline{y_k}=\left(E_\text{in}\kappa\alpha_\beta e^{-\varrho\frac{L}{2}}|x_k|\right)^2(1-\beta)
\in\mathbb{R}^{\geq 0},
\label{eq:y_bar}
\end{equation}
and where $n_k\sim\mathcal{N}(0,\sigma_\text{sh}^2)$,
$m_k\sim\mathcal{N}(0,(1-\beta)T\sigma_\text{th}^2)$, and
$k\in\range{0}{n-1}$.

Define $\psi:\mathbb{C}^2\rightarrow\mathbb{R}^{\geq 0}$ such that, for all
$(a,b)\in\mathbb{C}^2$,
\begin{equation}
\psi(a,b)=\frac{1}{4}|a+b|^2+\frac{1}{8}|a-b|^2.
                \label{eq:psi}
\end{equation}
Note that $\psi(a,b)$ is a function of the magnitudes of $a$ and $b$ and
the cosine of the their phase difference;  thus it follows that
\begin{equation}
\psi( a e^{i\theta}, b e^{i\theta}) = \psi(a,b),
\label{eq:psirotate}
\end{equation}
for any $\theta \in \mathbb{R}$.  By using (\ref{eq:ykzl}) and
(\ref{eq:aprx_driver}), one may approximate $z_\ell$ as
\begin{equation}
z_\ell\simeq\mathfrak{R}\,\overline{z_\ell}+\sqrt{\overline{z_\ell}}p_\ell+q_\ell,
\label{eq:zl_aprx}
\end{equation}
where
\begin{equation}
	\overline{z_\ell}=\left(\alpha_\beta E_\text{in}\kappa e^{-\varrho\frac{L}{2}}\right)^2
\psi(x_\ell,x_{\ell+1})\beta\in\mathbb{R}^{\geq 0}, 
\label{eq:z_bar}
\end{equation}
and where $p_\ell\sim\mathcal{N}(0,\sigma_\text{sh}^2)$,
$q_\ell\sim\mathcal{N}(0,\beta T\sigma_\text{th}^2)$, and
$\ell\in\range{0}{n-2}$.  Furthermore, given the transmitted symbols
$x_0,\ldots,x_{n-1}$, for any $k,k'\in\range{0}{n-1}$ and
$\ell,\ell'\in\range{0}{n-2}$, $n_k$, $m_{k'}$, $p_\ell$, and
$q_{\ell'}$ are mutually independent random variables. Under the
approximations (\ref{eq:yk_aprx}) and (\ref{eq:zl_aprx}), the
conditional distribution of $y_k$ given $x_k$ is
\begin{equation}
y_k\mid x_k\approx\mathcal{N}
\left(\mathfrak{R}\,\overline{y_k},\overline{y_k}\sigma_\text{sh}^2+(1-\beta)T\sigma_\text{th}^2\right),
\label{eq:y_likelihood}
\end{equation}
and the conditional distribution of $z_\ell$ given $x_{\ell}$ and
$x_{\ell+1}$ is
\begin{equation}
z_\ell\mid x_\ell,x_{\ell+1}\approx\mathcal{N}\left(\mathfrak{R}\,\overline{z_\ell},
\overline{z_\ell}\sigma_\text{sh}^2+\beta T\sigma_\text{th}^2\right),
\label{eq:z_likelihood}
\end{equation}
where ``$\approx$" should be read as ``approximately distributed as.''
From (\ref{eq:y_likelihood}) and (\ref{eq:z_likelihood}), it follows
that the detection performance---\textit{i.e.}, bit error rate (BER) or
achievable information rate---depends on the symbol rate.  This is in
contrast with communication over a classical additive white Gaussian
noise (AWGN) channel, where the detection performance does not depend on
the symbol rate.  In the communication scenarios considered in this paper,
shot noise dominates thermal noise at sufficiently high symbol rates.
\end{subsection}

\begin{subsection}{Codebook}
\label{subsec:codebook}

Similar to \cite{tukey}, we define the function
$\Upsilon:\mathbb{C}^n\rightarrow\mathbb{R}^{2n-1}$ as 
\begin{multline*}
\Upsilon(x_0,\ldots,x_{n-1})=\Big(|x_0|^2,\psi(x_0,x_1),|x_1|^2,\psi(x_1,x_2),\\
\ldots, |x_{n-2}|^2,\psi(x_{n-2},x_{n-1}),|x_{n-1}|^2\Big),
\end{multline*}
and, for any $ \bm{x}\in\mathbb{C}^n$, we refer to $\Upsilon(\bm{x})$ as
the \emph{signature} of $\bm{x}$.  From (\ref{eq:yk_aprx}) and
(\ref{eq:zl_aprx}) one may see that any two transmitted complex-valued
$n$-vectors having the same signature are indistinguishable by our
receiver, even in the absence of noise.

We may define an equivalence relation on $\mathbb{C}^n$, deeming two
vectors, $\bm{x}$ and $\bm{\tilde{x}}\in\mathbb{C}^n$ as
\textit{square-law equivalent}, denoted $\bm{x}\equiv\bm{\tilde{x}}$, if
$\Upsilon(\bm{x})=\Upsilon(\bm{\tilde{x}})$.  When $\bm{x}$ and
$\bm{\tilde{x}}$ are not square-law equivalent, they are called
\textit{square-law distinct} (SLD).  We refer to any set $\mathcal{S}
\subset \mathbb{C}^n$ of pairwise SLD $n$-tuples as a \emph{codebook},
and we will always assume that the transmitted $n$-tuples,
$(x_0,\ldots,x_{n-1})$, in (\ref{eq:g}) are chosen from some fixed
codebook.
\end{subsection}

\begin{subsection}{Decoder}
\label{subsec:decoder}

The decoder block in Fig.~\ref{fig:system_model} accepts $y_0,\ldots,y_{n-1}$
and $z_0,\ldots,z_{n-2}$ and produces the detected $n$-tuple $\hat{\bm{x}}$, an
element of the codebook $\mathcal{S}$.  Note that the ISI-present intervals at
the beginning and at the end of the received waveform, which overlap,
respectively, with the previous and the next blocks, are ignored in detection.
Therefore, no time-guard is needed between consecutive blocks. 

From (\ref{eq:yk_aprx}) and (\ref{eq:zl_aprx}) one may conclude that,
independent of the decoder block, any pair of codebooks $\mathcal{S}_1$
and $\mathcal{S}_2$ where
$\Upsilon(\mathcal{S}_1)=\Upsilon(\mathcal{S}_2)$, \textit{i.e.}, where
the codewords produce the same set of signatures, are equivalent from
the perspective of the detector and thus have the same error rate.

In \cite{tukey}, the decoder performs ML detection by a brute-force
search over all $|\mathcal{S}|$ elements of the codebook.  We describe a
more practical lower-complexity trellis-based decoding algorithm in the
next section.
\end{subsection}

\end{section}


\begin{section}{The Trellis Diagram and Star QAM Constellations}
\label{sec:trellis}

In this section we show how trellis diagrams that result from the use of
star quadrature amplitude modulation (SQAM) can be used to determine SLD
sequences for codebook design and for decoding.

\begin{subsection}{Trellises}

Recall that a trellis
$\mathbb{T}=(\mathcal{V},\mathcal{A},\mathcal{E})$, is an edge-labelled
directed graph having a vertex set $\mathcal{V}$, an edge-label set
$\mathcal{A}$, and a set $\mathcal{E} \subset \mathcal{V} \times
\mathcal{A} \times \mathcal{V}$ of labelled edges,
satisfying~\cite{trellis1}:
\begin{enumerate}
\item $\mathcal{V}$ contains distinct vertices $v_r$ (the \emph{root},
having zero in-degree) and $v_g$ (the \emph{goal}, having zero
out-degree);
\item every vertex in  $\mathcal{V}$ is reachable by a directed path
from $v_r$;
\item $v_g$ is reachable by a directed path from every vertex in
$\mathcal{V}$;
\item every directed path from $v_r$ to $v_g$ has the same length,
called the length of $\mathbb{T}$ and denoted by $|\mathbb{T}|$. 	
\end{enumerate}
These properties imply that every directed path from $v_r$ to any fixed
vertex $v \in \mathcal{V}$ has the same length, $d(v)$, called the
\emph{depth} of $v$.  Thus $d(v_r)=0$ and $d(v_g) = |\mathbb{T}|$.  The
vertex set $\mathcal{V}$ can then be partitioned as $\mathcal{V}_0 \cup
\cdots \cup \mathcal{V}_{|\mathbb{T}|}$, where $\mathcal{V}_j$ denotes
the set of vertices having depth $j$.  The subgraph of $\mathbb{T}$
induced by edges incident from $\mathcal{V}_j$ is called the $j$th
\emph{trellis section} of $\mathbb{T}$.

An edge $e = (v_1,a,v_2) \in \mathcal{E}$ (directed from $v_1$ to $v_2$
with label $a$) is said to have label $\lambda(e) = a$.  If
\[
\bm{p} = ((v_1,a_1,v_2),(v_2,a_2,v_3),\ldots,(v_{m-1},a_{m-1},v_m))
\]
is a directed path from a vertex $v_1$ to a vertex $v_m$ in
$\mathbb{T}$, we denote by $\lambda(\bm{p})$ the label sequence
$(a_1,\ldots,a_{m-1})$ associated with that path.  The set of all
directed paths from $v_r$ to $v_g$ in $\mathbb{T}$ is denoted as
$\mathcal{P}_{\mathbb{T}}$, and the set of associated label sequences is
denoted as
\[
\lt \triangleq \{ \lambda(\bm{p}) \mid \bm{p}
\in \mathcal{P}_{\mathbb{T}} \}.
\]
In effect, a trellis $\mathbb{T}$ is simply a convenient graphical
representation of its set of label sequences, $\lt$.
\end{subsection}

\begin{subsection}{Star QAM Constellation}
\label{subsec:sqam}

In this paper we consider a class of signal constellations called
star-quadrature-amplitude modulation (SQAM)~\cite{munich_tukey} having
particularly tractable trellis representations which enable their
investigation.  Star QAM constellations have points in the complex plane
that lie at the intersections of a number of uniformly spaced rays and a
number of concentric rings, as illustrated in Fig.~\ref{fig:sqam}.

More formally, the number of rings is denoted as $n_r$. The ring radii
are determined by a \emph{radius set} $\mathcal{D} =  \{ d_1, d_2,
\ldots, d_{n_r} \}$ of $n_r$ distinct positive real numbers.  The number
of phase angles (or rays) is denoted as $n_p$. The phase angles are
drawn from the set
\[
\Phi_{n_p}\triangleq\left\{0,\frac{2\pi}{n_p},\ldots, \frac{2\pi(n_p-1)}{n_p}\right\},
\]
and the $(n_r,n_p)$-SQAM constellation with radius set $\mathcal{D}$ and
$n_p$ phases is given as
\[
\krnp=  \{ d e^{i \theta} \colon d \in \mathcal{D}, \theta \in \Phi_{n_p} \}.
\]
For later convenience, let
\[
\mathcal{D}_j \triangleq \{ d_j e^{i \theta} \colon \theta \in \Phi_{n_p} \}
\]
be the set of constellation points at radius $d_j$.  In~\cite{tukey},
$\krnp$ is referred to as an \textit{$n_r$-ring/$n_p$-ary phase
constellation}.  For brevity, when $\mathcal{D}$ is clear from context
or irrelevant, we may refer to an $(n_r,n_p)$-SQAM constellation without
mentioning the radius set $\mathcal{D}$ explicitly. 

\newsavebox{\tempbox}
\newsavebox{\tempboxBig}
\begin{figure}
\centering
\sbox{\tempbox}{\includegraphics[scale=0.38]{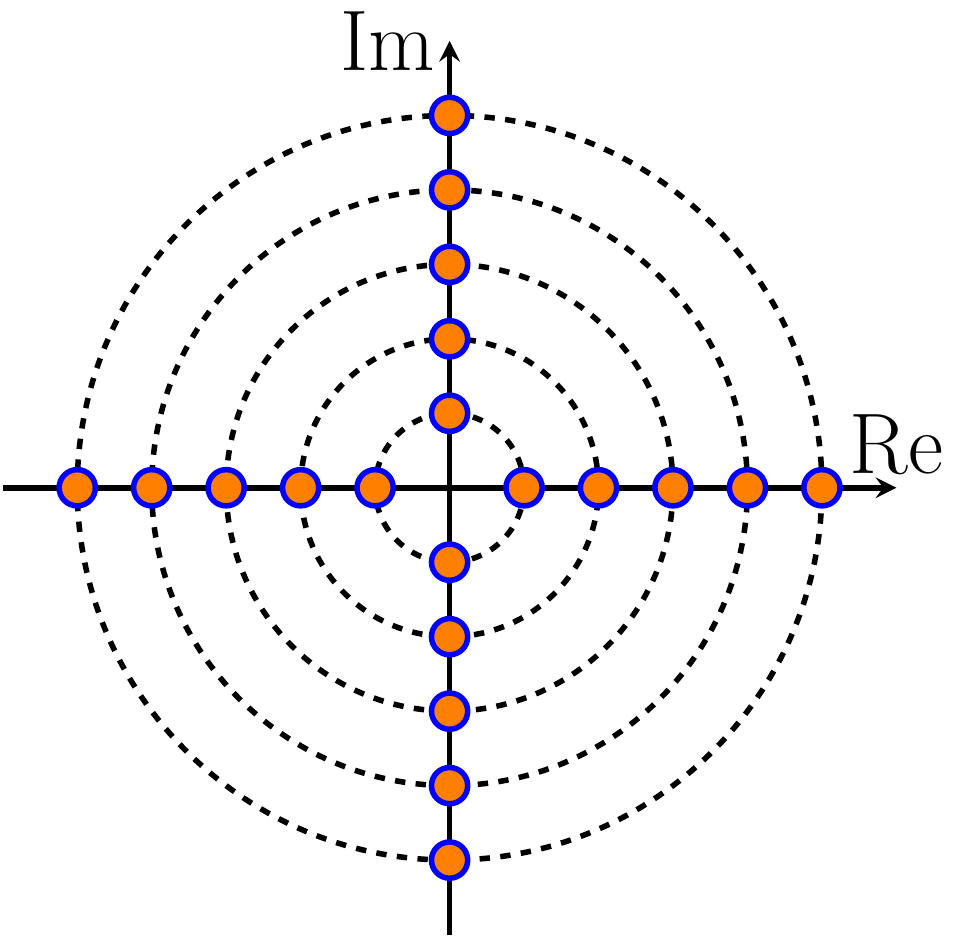}}
\hspace*{-3.2cm}
\subfloat[\label{fig:33sqam}]{\vbox to \ht\tempbox{%
\vfil
\includegraphics[scale=0.38]{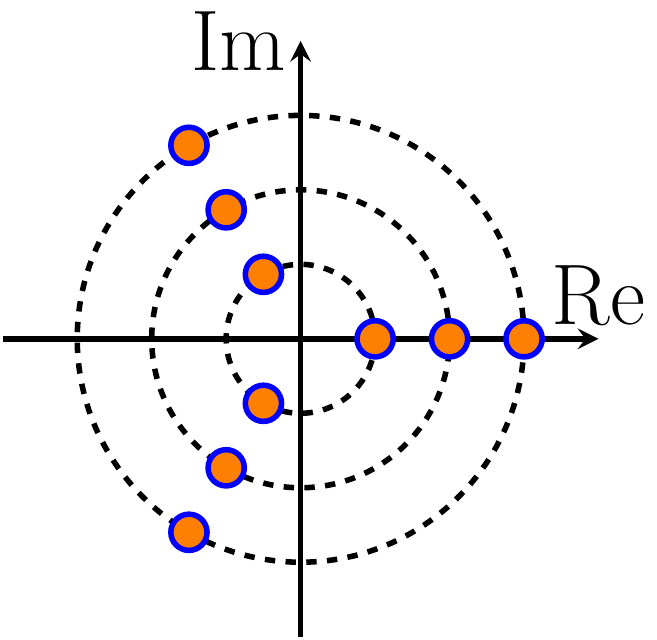}
\vfil}}
\hspace*{-6.3cm}
\subfloat[\label{fig:35sqam}]{\vbox to \ht\tempbox{%
\vfil
\includegraphics[scale=0.38]{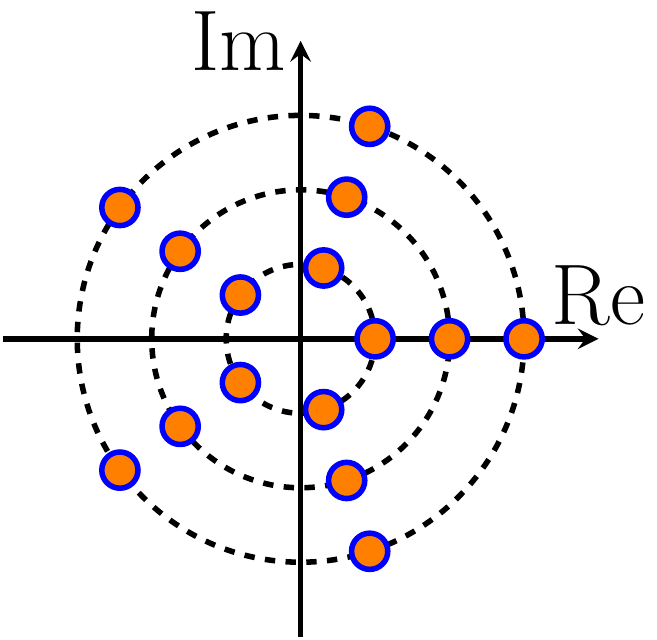}
\vfil}}
\hspace*{-3.2cm}
\subfloat[\label{fig:54sqam}]{\usebox{\tempbox}}\hspace*{-3.2cm}\\
\caption{Three star QAM constellations: (a) $(3,3)$-SQAM, (b) $(3,5)$-SQAM, and (c) $(5,4)$-SQAM.}
\label{fig:sqam}
\end{figure}
\end{subsection}

\begin{subsection}{A Trellis for SQAM Constellations}
\label{subsec:sqam_trellis}

In this subsection, we provide a trellis diagram, $\mathbb{T}$, that
describes the signatures of SLD sequences of length $n$ drawn from
$\krnp$, an $(n_r,n_p)$-SQAM constellation with radius set $\mathcal{D}
= \{ d_1, \ldots, d_{n_r} \}$.  The trellis $\mathbb{T}$ will be used
both to design a codebook $\mathcal{S}$ and for decoding.

The length of $\mathbb{T}$ is $2n-1$.  The vertex set of $\mathbb{T}$ is
given as $
\mathcal{V}=\{v_r\}\cup\mathcal{V}_1\cup\mathcal{V}_2\cup\ldots\cup\mathcal{V}_{2n-2}\cup\{v_g\},
$ where, for $j\in\range{1}{2n-2}$, $\mathcal{V}_j$ is the $n_r$-set
$\mathcal{V}_j=\{v_j^1,v_j^2,\ldots,v_j^{n_r}\}$.  Thus, except at depth
$0$ and at depth $|\mathbb{T}|$, the trellis $\mathbb{T}$ has exactly
$n_r$ vertices at each depth.  To define the edge-label alphabet of
$\mathbb{T}$, for any (not necessarily distinct) $j$ and
$k\in\range{1}{n_r}$ let 
\begin{equation}
  \Psi_{j,k}=\{\psi(a,b) \colon a \in \mathcal{D}_j,~b \in \mathcal{D}_k \}.
\label{eq:Psi_definition}
\end{equation}
By symmetry, $\Psi_{j,k}=\Psi_{k,j}$.  Let
$\mathcal{A}_\text{ISI-present}=\bigcup_{j=1}^{n_r} \bigcup_{k=1}^{n_r}
\Psi_{j,k}$ and let
$\mathcal{A}_\text{ISI-free}=\{d_1^2,\ldots,d_{n_r}^2\}$.  The
edge-label alphabet of $\mathbb{T}$ is then given as
$\mathcal{A}=\mathcal{A}_\text{ISI-present}\cup\mathcal{A}_\text{ISI-free}$.

Finally, to define the edge set of $\mathbb{T}$, let
\begin{align}
\mathcal{E}_r&=\left\{(v_r,d_j^2,v_1^j): j \in \range{1}{n_r}\right\},\nonumber\\ 
\mathcal{E}_g&=\left\{(v_{2n-2}^j,d_j^2,v_g):j \in \range{1}{n_r}\right\},\nonumber\\
\mathcal{E}_\text{ISI-present}&=
\Big\{(v_{2b+1}^j,\zeta,v_{2b+2}^k):b\in\range{0}{n-2},\nonumber\\
&\quad\quad j,k \in \range{1}{n_r}, \zeta\in\Psi_{j,k}\Big\},\nonumber\\
\mathcal{E}_\text{ISI-free}&=
\Big\{(v_{2b}^j,d_j^2,v_{2b+1}^j):b \in \range{1}{n-2},j \in \range{1}{n_r}\Big\}\nonumber;
\end{align}
then
$\mathcal{E}=\mathcal{E}_r\cup\mathcal{E}_g\cup\mathcal{E}_\text{ISI-present}
\cup\mathcal{E}_\text{ISI-free}$.  Note that $\mathcal{E}_r$ is the set
of edges in $\mathbb{T}$ incident from the root vertex $v_r$,
$\mathcal{E}_g$ is the set of edges incident to the goal vertex $v_g$,
$\mathcal{E}_\text{ISI-present}$ are edges incident from vertices at odd
depth, while $\mathcal{E}_\text{ISI-free}$ are edges incident from
vertices at even depth.  There are $|\Psi_{j,k}|$ parallel edges between
$v_{2b+1}^j$ and $v_{2b+2}^k$, for $b\in\range{0}{n-2}$ and $j,k \in
\range{1}{n_r}$.

\begin{example}
\label{example_33}
In this example, we sketch the trellis diagram for the $(3,3)$-SQAM
constellation with radius set
$\mathcal{D}=\{2\sqrt{2},4\sqrt{2},6\sqrt{2}\}$, as shown in
Fig.~\ref{fig:33sqam}, when the length of SLD sequences is $n=3$.  We
have $\Psi_{1,1}=\left\{8,5\right\}$, $\Psi_{1,2}=\left\{19,13\right\}$,
$\Psi_{1,3}=\left\{27,36\right\}$, $\Psi_{2,2}=\left\{20,32\right\}$,
$\Psi_{2,3}=\left\{33,51\right\}$, and
$\Psi_{3,3}=\left\{45,72\right\}$.  The resulting trellis is shown in
Fig.~\ref{fig:trellis_example}.  One may see that for $b\in\{1,3\}$, the
edge labels between, for example, $v_b^{1}$ and $v_{b+1}^{3}$ belong to
$\Psi_{1,3}$. Similarly, the edge label between, for example, $v_2^{2}$
to $v_3^{2}$ is equal to $(4\sqrt{2})^2$.  For ease of reading, the
radius set $\mathcal{D}$ for this example was judiciously chosen to
result in integer edge labels. 
\end{example}

\begin{figure}[t]
\centering
\includegraphics[scale=0.66666]{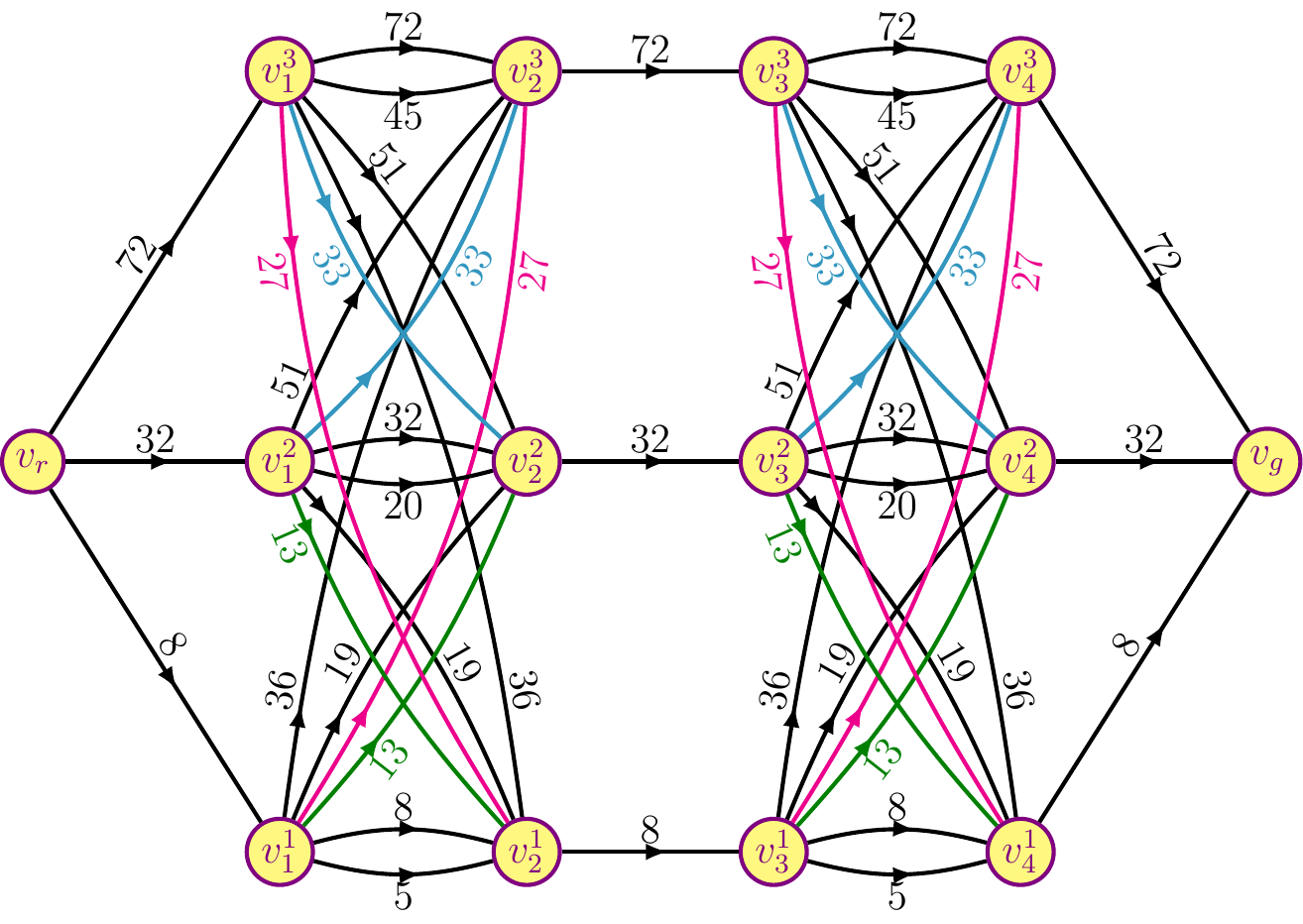}
\caption{The trellis for Example~\ref{example_33}.}
\label{fig:trellis_example}
\end{figure}

Theorem~\ref{thm:psi_cardinality} gives the number of parallel edges
between $v_j^k$ and $v_{j+1}^{\ell}$, for an odd
$j\in\{1,3,\ldots,2n-1\}$.
\begin{theorem}
\label{thm:psi_cardinality}
For any $k$ and $\ell\in\range{1}{n_r}$, $|\Psi_{k,\ell}|=\lceil\frac{n_p+1}{2}\rceil$.
\end{theorem}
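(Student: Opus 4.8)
The plan is to reduce the claim to a count of the distinct values taken by a cosine on a uniform grid of angles. First I would use the law of cosines to isolate the phase dependence of $\psi$. Writing $a = d_k e^{i\theta_1}$ and $b = d_\ell e^{i\theta_2}$ and substituting $|a\pm b|^2 = d_k^2 + d_\ell^2 \pm 2 d_k d_\ell \cos(\theta_1-\theta_2)$ into (\ref{eq:psi}), one obtains
\[
\psi\!\left(d_k e^{i\theta_1}, d_\ell e^{i\theta_2}\right) = \tfrac{3}{8}\left(d_k^2+d_\ell^2\right) + \tfrac{1}{4}\, d_k d_\ell \cos(\theta_1-\theta_2),
\]
which is consistent with the rotation invariance (\ref{eq:psirotate}) already noted. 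Since $d_k,d_\ell\in\mathbb{R}^{>0}$, the affine map $c\mapsto \tfrac{3}{8}(d_k^2+d_\ell^2)+\tfrac14 d_k d_\ell c$ is injective on $\mathbb{R}$; hence $|\Psi_{k,\ell}|$ equals the number of distinct values of $\cos(\theta_1-\theta_2)$ as $\theta_1,\theta_2$ range over $\Phi_{n_p}$.

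Next I would observe that $\theta_1-\theta_2 = \tfrac{2\pi}{n_p}(m_1-m_2)$ for some $m_1,m_2\in\range{0}{n_p-1}$, and that as $m_1,m_2$ vary the difference $m_1-m_2$ realizes every residue class modulo $n_p$ (take $m_2=0$). Since $\cos(\tfrac{2\pi}{n_p}\cdot\cdot)$ depends only on the argument modulo $n_p$, the set of achievable cosine values is exactly $C \triangleq \{\cos(2\pi m/n_p) \colon m\in\range{0}{n_p-1}\}$, and it remains to show $|C| = \lceil (n_p+1)/2\rceil$.

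Finally I would count $|C|$ using the elementary fact that, for integers $m,m'$, $\cos(2\pi m/n_p)=\cos(2\pi m'/n_p)$ if and only if $m\equiv \pm m' \pmod{n_p}$; thus $|C|$ is the number of orbits of $\mathbb{Z}/n_p\mathbb{Z}$ under the involution $m\mapsto -m$. The involution fixes $m=0$ always and fixes $m=n_p/2$ when $n_p$ is even, and it pairs up the remaining residues. A short case split (or Burnside's lemma, giving $\tfrac12(n_p+\gcd(2,n_p))$ orbits) then yields $(n_p+1)/2$ orbits when $n_p$ is odd and $n_p/2+1$ orbits when $n_p$ is even; in both cases this equals $\lfloor n_p/2\rfloor + 1 = \lceil (n_p+1)/2\rceil$, completing the proof.

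I do not anticipate a genuine obstacle here; the only points requiring care are (i) verifying that every residue class modulo $n_p$ is realized by some difference $m_1-m_2$ with $m_1,m_2\in\range{0}{n_p-1}$, so that no cosine value is missed, and (ii) the routine case analysis and floor/ceiling bookkeeping in the final orbit count.
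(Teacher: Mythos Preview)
Your proposal is correct and follows essentially the same approach as the paper: both reduce $|\Psi_{k,\ell}|$ to the number of distinct values of $\cos(\arg(ab^\ast))$ over phase differences in $\Phi_{n_p}$ and then use the symmetry $\cos\theta=\cos(2\pi-\theta)$ to count. The paper's proof is a two-line sketch that leaves the expansion of $\psi$ and the final count to the reader, whereas you have filled in the explicit affine form of $\psi$, the injectivity argument justifying the reduction, and the orbit count under $m\mapsto -m$; these are exactly the details the paper omits.
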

\begin{IEEEproof}
The phase difference between any $a\in\mathcal{D}_k$ and
$b\in\mathcal{D}_\ell$ is in $\Phi_{n_p}$ and, since
$\cos(\theta)=\cos(2\pi-\theta)$, for any $\theta \in \mathbb{R}$,
we leave it to the reader
to check that there are exactly
$\lceil\frac{n_p+1}{2}\rceil$ distinct values for $\cos(\arg(ab^\ast))$.
\end{IEEEproof}

For example, Fig.~\ref{fig:coscount} illustrates the situation that
arises when $n_p = 8$ and $n_p = 9$, for which $|\Psi_{k,\ell}| = 5$.

\begin{figure}[t]
\centering
\includegraphics[scale=0.666]{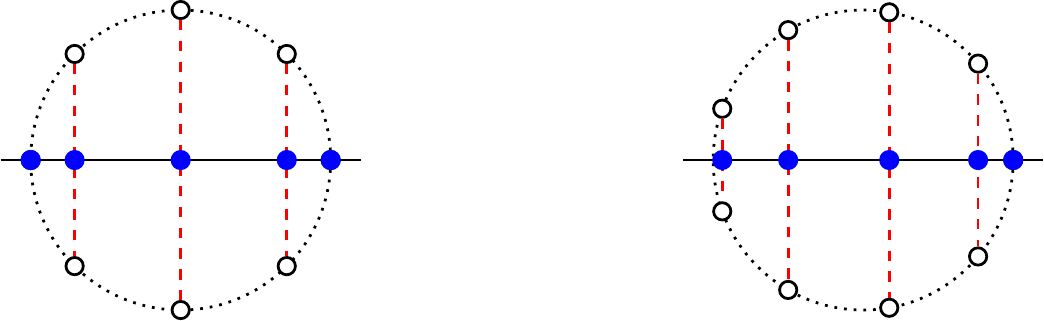}
\caption{Illustrating Theorem~\ref{thm:psi_cardinality}.  The filled
circles show the possible values for $\cos(\arg(ab^\ast))$.}
\label{fig:coscount}
\end{figure}

The following theorem counts $|\mathcal{P}_{\mathbb{T}}|$, the number of
distinct directed paths from root to goal in $\mathbb{T}$.
\begin{theorem}
\label{thm:num_path}
For the trellis $\mathbb{T}$ of length $2n-1$ described in
Sec.~\ref{subsec:sqam_trellis}, 
\begin{equation*}
|\pt|=\left\lceil\frac{n_p+1}{2}\right\rceil^{n-1}n_r^n.
\end{equation*}
\end{theorem}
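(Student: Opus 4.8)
The plan is to exploit the layered structure of $\mathbb{T}$: its vertex set is partitioned by depth into $\mathcal{V}_0,\ldots,\mathcal{V}_{2n-1}$, every edge joins consecutive depths, and---as I shall argue directly from the explicit description of $\mathcal{E}$---all vertices of a common depth have the same out-degree. For such a trellis the number of root-to-goal paths factors over the $2n-1$ trellis sections. Concretely, letting $N_m$ denote the number of directed paths in $\mathbb{T}$ from $v_r$ to a vertex of depth $m$, a one-line induction shows that $N_{m+1}=o_m N_m$ whenever every vertex of depth $m$ has out-degree $o_m$: since all edges join consecutive depths, each length-$(m+1)$ path from $v_r$ is a length-$m$ path from $v_r$ to a depth-$m$ vertex followed by one of the $o_m$ edges leaving that vertex. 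Since $N_0=1$ and $N_{2n-1}=|\pt|$ (the only vertex of depth $2n-1$ being $v_g$), this gives $|\pt|=\prod_{m=0}^{2n-2}o_m$, and it remains to compute the $o_m$.

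Next I would read the out-degrees off the four edge families. From $v_r$ there is exactly one edge to each $v_1^j$, so $o_0=n_r$. For $m=2b$ with $b\in\range{1}{n-2}$, the only edge incident from $v_{2b}^j$ is the ISI-free edge $(v_{2b}^j,d_j^2,v_{2b+1}^j)$, so $o_{2b}=1$; likewise the only edge incident from $v_{2n-2}^j$ is $(v_{2n-2}^j,d_j^2,v_g)$, so $o_{2n-2}=1$. For $m=2b+1$ with $b\in\range{0}{n-2}$, the edges incident from $v_{2b+1}^j$ are precisely the ISI-present edges $(v_{2b+1}^j,\zeta,v_{2b+2}^k)$ with $k\in\range{1}{n_r}$ and $\zeta\in\Psi_{j,k}$, so the out-degree is $\sum_{k=1}^{n_r}|\Psi_{j,k}|$. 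By Theorem~\ref{thm:psi_cardinality}, $|\Psi_{j,k}|=\lceil\frac{n_p+1}{2}\rceil$ for all $j,k$, hence $o_{2b+1}=n_r\lceil\frac{n_p+1}{2}\rceil$, independent of which vertex is chosen at depth $2b+1$---exactly the uniformity the product formula requires, and the only place Theorem~\ref{thm:psi_cardinality} is needed.

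Finally I would tally the sections by parity. The odd depths in $\range{1}{2n-3}$ are $1,3,\ldots,2n-3$, giving $n-1$ ISI-present sections; the even depths in $\range{2}{2n-4}$ give $n-2$ ISI-free sections; the remaining two sections start at depth $0$ and at depth $2n-2$. Multiplying,
\[
|\pt|=o_0\cdot\Big(\prod_{b=0}^{n-2}o_{2b+1}\Big)\cdot\Big(\prod_{b=1}^{n-2}o_{2b}\Big)\cdot o_{2n-2}
=n_r\cdot\left(n_r\left\lceil\tfrac{n_p+1}{2}\right\rceil\right)^{n-1}\cdot 1\cdot 1
=\left\lceil\tfrac{n_p+1}{2}\right\rceil^{n-1}n_r^{\,n},
\]
as claimed; for $n=2$ there are no ISI-free sections and the formula reads $|\pt|=n_r\cdot n_r\lceil\frac{n_p+1}{2}\rceil$, consistent with the general expression.

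There is no real obstacle here beyond bookkeeping; the two things to get right are (i) justifying that all vertices at a fixed depth share an out-degree, which reduces to the structural description of $\mathcal{E}$ together with Theorem~\ref{thm:psi_cardinality}, and (ii) counting the ISI-present sections correctly as $n-1$ (one per consecutive symbol pair). As a conceptual check, the $n$ factors of $n_r$ correspond to freely choosing a ring for each of the $n$ symbols, while the $n-1$ factors of $\lceil\frac{n_p+1}{2}\rceil$ correspond to freely choosing, for each ISI-present segment, among the distinct attainable values of $\psi(x_b,x_{b+1})$ once the two rings are fixed---matching the interpretation of a root-to-goal path as a signature.
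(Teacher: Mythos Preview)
Your proof is correct and takes essentially the same approach as the paper's: the paper computes $|\pt|$ as the product $\prod_{j=0}^{2n-2}\bm{A}_j$ of the section adjacency matrices, which---since each $\bm{A}_j$ here is either $\bm{I}_{n_r}$ or a scalar multiple of an all-ones matrix---collapses to exactly your product of uniform out-degrees $\prod_m o_m$, with Theorem~\ref{thm:psi_cardinality} invoked in the same place (the odd sections).
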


\begin{IEEEproof}
Let $\bm{A}_j$ denote the adjacency matrix of the $j$th trellis section
of $\mathbb{T}$.  We have
\[
\bm{A}_j = \begin{cases}
\mathds{1}_{1 \times n_r}, & j = 0, \\
\bm{I}_{n_r}, & j \in \{ 2, 4, \ldots, 2n-4 \}, \\
\left\lceil\frac{n_p+1}{2}\right\rceil \mathds{1}_{n_r\times n_r}, & j \in \{ 1, 3,
\ldots, 2n-3 \}, \\
\mathds{1}_{n_r \times 1}, & j = 2n-2, \end{cases}
\]
where the case for odd $j$ follows from
Theorem~\ref{thm:psi_cardinality}.  The number of paths from root to
goal in $\mathbb{T}$ is then given as
\[
|\pt|= \prod_{j=0}^{2(n-1)} \bm{A}_j = 
\left\lceil\frac{n_p+1}{2}\right\rceil^{n-1} n_r^n.
\]
\end{IEEEproof}

The next theorem shows that there is a one-to-one correspondence between
$\mathcal{P}_{\mathbb{T}}$ and $\mathcal{L}_{\mathbb{T}}$.
\begin{theorem}
$|\mathcal{P}_{\mathbb{T}}| = |\mathcal{L}_{\mathbb{T}}|$.
\label{thm:ptlt}
\end{theorem}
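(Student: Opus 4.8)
The plan is to show that distinct root-to-goal paths in $\mathbb{T}$ carry distinct label sequences; since every label sequence in $\lt$ comes from at least one path, this gives a bijection between $\pt$ and $\lt$ and hence $|\pt| = |\lt|$. The key structural observation is that the vertices of $\mathbb{T}$ can be recovered from the label sequence of a path. First I would fix a path $\bm{p} \in \pt$ with label sequence $\lambda(\bm{p}) = (a_1, a_2, \ldots, a_{2n-1})$, and note that the labels alternate in type: the labels at odd positions $a_1, a_3, \ldots, a_{2n-1}$ are ISI-free labels drawn from $\mathcal{A}_\text{ISI-free} = \{d_1^2, \ldots, d_{n_r}^2\}$, while the labels at even positions $a_2, a_4, \ldots, a_{2n-2}$ are ISI-present labels from $\mathcal{A}_\text{ISI-present}$. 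This follows directly from the description of $\mathcal{E}_r$, $\mathcal{E}_\text{ISI-free}$, $\mathcal{E}_\text{ISI-present}$, and $\mathcal{E}_g$: the path must pass alternately through vertices at even and odd depths, using ISI-free edges out of even-depth vertices (including $v_r$) and ISI-present edges out of odd-depth vertices.

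The crucial point is that the $d_j^2$ are distinct (the radii $d_1, \ldots, d_{n_r}$ are distinct positive reals), so an ISI-free label $d_j^2$ uniquely determines the index $j \in \range{1}{n_r}$. Next I would use the fact that along any path, a vertex $v_{2b}^j$ at even depth $2b$ (for $b \in \range{1}{n-2}$) has its unique outgoing edge labelled $d_j^2$, and also its incoming edge — by the definition of $\mathcal{E}_\text{ISI-present}$ — arrives at superscript $j$; more importantly, the vertex $v_1^j$ reached from $v_r$ has the edge label $d_j^2$ on that first edge, and the edges of $\mathcal{E}_\text{ISI-free}$ preserve the superscript. So the label $a_{2b+1}$ at each odd position, being an element $d_{i_b}^2$ of $\mathcal{A}_\text{ISI-free}$, pins down the superscript $i_b$ of both endpoints of that edge, namely $v_{2b}^{i_b}$ and $v_{2b+1}^{i_b}$ (and the root/goal edges likewise pin down $v_1^{i_0}$ and $v_{2n-2}^{i_{n-1}}$). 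Since the depth of each vertex on the path is determined by its position, the label sequence therefore determines the entire vertex sequence $v_r, v_1^{i_0}, v_2^{i_0}, v_3^{i_1}, \ldots$ Wait — I must be careful here: the vertex at depth $2b+1$ and the vertex at depth $2b$ are joined by an ISI-free edge and share a superscript, but I need to confirm consecutive odd-position labels correctly index the right shared superscripts; the clean statement is that $a_1$ determines $v_1$, then for each $b$, $a_{2b+1}$ determines both $v_{2b}$ and $v_{2b+1}$, and consistency of the trellis (the ISI-free edge $v_{2b}^j \to v_{2b+1}^j$) makes this unambiguous.

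Once the full vertex sequence is determined by $\lambda(\bm{p})$, I would finish as follows: if two paths $\bm{p}, \bm{p}'$ have $\lambda(\bm{p}) = \lambda(\bm{p}')$, then they visit the same sequence of vertices, and at each step the traversed edge has a prescribed label (the corresponding entry of the common label sequence) and prescribed endpoints; since an edge in $\mathcal{E}$ is precisely a triple (tail, label, head), the two paths use exactly the same edges and hence $\bm{p} = \bm{p}'$. Thus $\lambda$ restricted to $\pt$ is injective, and since it is by definition surjective onto $\lt$, we get $|\pt| = |\lt|$. The one delicate step — the main obstacle — is the bookkeeping in the previous paragraph: one must verify that the ISI-free edge labels at odd positions really do determine the superscripts of the even-depth vertices as well, which relies on the specific incidence structure ($\mathcal{E}_\text{ISI-free}$ connects $v_{2b}^j$ only to $v_{2b+1}^j$, and $\mathcal{E}_r$/$\mathcal{E}_g$ connect $v_r$/$v_g$ to $v_1^j$/$v_{2n-2}^j$ with label $d_j^2$) together with the distinctness of $\{d_1^2, \ldots, d_{n_r}^2\}$; the ISI-present labels play no role in the injectivity argument, since the ISI-free labels already determine every vertex, after which every edge is forced.
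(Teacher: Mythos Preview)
Your proposal is correct and follows essentially the same approach as the paper's proof: surjectivity of $\lambda$ is by definition, and injectivity is established by observing that the ISI-free labels (the $d_j^2$'s, which are pairwise distinct) pin down the full vertex sequence, after which each edge is determined as the triple (tail, label, head). The paper phrases the last step as ``parallel edges in odd trellis sections have distinct labels,'' which is equivalent to your observation that $\mathcal{E}\subset\mathcal{V}\times\mathcal{A}\times\mathcal{V}$ is a \emph{set} of triples; your closing remark that ``the ISI-present labels play no role'' is slightly overstated, since they are precisely what selects among parallel edges, but your actual argument uses them correctly.
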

\begin{IEEEproof}
The map $\lambda$ taking paths in $\mathcal{P}_{\mathbb{T}}$ to label
sequences in $\mathcal{L}_{\mathbb{T}}$ is surjective by definition.
For $\lambda$ to be injective, every label sequence
$(\ell_0,\ell_1,\ldots,\ell_{2n-2})\in \mathcal{L}_{\mathbb{T}}$ must
have a unique pre-image $\bm{p} \in \mathcal{P}_{\mathbb{T}}$.  The set
of vertices through which $\bm{p}$ traverses is uniquely determined by
$(\ell_0, \ell_2, \ldots, \ell_{2n-2})$.  Furthermore, the selection
from among  parallel edges in the odd trellis sections is uniquely
determined by $(\ell_1, \ell_3, \ldots, \ell_{2n-3})$, since such
parallel edges have distinct labels.
\end{IEEEproof}
\end{subsection}

\begin{subsection}{Finding Square-law Distinct Sequences Using the Trellis}
\label{subsec:sld_trellis}

Finding SLD sequences for a relatively small constellation size and a
small block length $n$ is feasible by computer search.  For example, for
a constellation $\mathcal{K}$, one may produce two matrices,
$\mathcal{Q}\in\mathcal{K}^{|\mathcal{K}|^n\times n}$ and
$\mathcal{M}\in\mathbb{R}^{|\mathcal{K}|^n\times (2n-1)}$ where the rows
of $\mathcal{Q}$ are all possible $n$-vectors defined over $\mathcal{K}$
and the rows of $\mathcal{M}$ are the corresponding signatures.  Then,
one may find all SLD sequences (which are some rows of $\mathcal{Q}$) by
deleting the duplicated rows of $\mathcal{M}$.

For large values of constellation size or block length $n$, finding
those sequences by exhaustive search is computationally intractable.
Fortunately, SLD sequences drawn from an $(n_r,n_p)$-SQAM constellation
can be found using the trellis described in
Sec.~\ref{subsec:sqam_trellis}.

Recall that two $n$-tuples $\bm{s}$ and $\tilde{\bm{s}}$ in $\krnp^n$
are square-law equivalent, written $\bm{s} \equiv \tilde{\bm{s}}$, if
$\Upsilon(\bm{s}) = \Upsilon(\tilde{\bm{s}})$, \textit{i.e.}, if the two
$n$-tuples have the same signature.  The set of such $n$-tuples is
partitioned into disjoint equivalence classes.  A \emph{transversal} of
this collection of classes is then any subset $\mathcal{T} \subset
\krnp^n$ containing exactly one element from each equivalence class.  We
call such a transversal an \emph{SLD-transversal} since, by definition,
the elements of any such transversal are SLD.  A codebook $\mathcal{S}$
is, by definition, a subset of an SLD-transversal; therefore,
SLD-transversals are the largest possible codebooks that can be used for
the system described in Sec.~\ref{sec:system_model}.

\begin{theorem}
\label{thm:unq_label}
For every SLD-transversal $\mathcal{T}$,
\[
\{ \Upsilon(\bm{s}) \colon \bm{s} \in  \mathcal{T} \} =
\mathcal{L}_{\mathbb{T}}.
\]
In other words, the signature of each $n$-vector in an SLD-transversal
is a label sequence in $\mathcal{L}_{\mathbb{T}}$, and each label
sequence in $\mathcal{L}_{\mathbb{T}}$ is the signature of an $n$-vector
in the SLD-transversal.
\end{theorem}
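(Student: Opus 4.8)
\begin{IEEEproof}[Proof proposal]
The plan is to establish the two set inclusions separately, the first being almost immediate and the second resting on a realizability claim: \emph{every} label sequence in $\lt$ is the signature of some $n$-vector drawn from $\krnp$.

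First I would prove $\{\Upsilon(\bm{s})\colon \bm{s}\in\mathcal{T}\}\subseteq\lt$. Since $\mathcal{T}\subseteq\krnp^n$, it suffices to show $\Upsilon(\bm{x})\in\lt$ for every $\bm{x}=(x_0,\ldots,x_{n-1})\in\krnp^n$. Write $x_k=d_{j_k}e^{i\theta_k}$ with $j_k\in\range{1}{n_r}$ and $\theta_k\in\Phi_{n_p}$, so that $|x_k|^2=d_{j_k}^2\in\mathcal{A}_\text{ISI-free}$ and, by~(\ref{eq:Psi_definition}), $\psi(x_k,x_{k+1})\in\Psi_{j_k,j_{k+1}}$. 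Then, reading off the edge definitions of Sec.~\ref{subsec:sqam_trellis}, the walk
\[
v_r\to v_1^{j_0}\to v_2^{j_1}\to v_3^{j_1}\to v_4^{j_2}\to\cdots\to v_{2n-3}^{j_{n-2}}\to v_{2n-2}^{j_{n-1}}\to v_g
\]
is a directed root-to-goal path: its edges from $\mathcal{E}_r$, $\mathcal{E}_\text{ISI-free}$, $\mathcal{E}_g$ carry the labels $|x_k|^2$, and its edges from $\mathcal{E}_\text{ISI-present}$ carry the labels $\psi(x_k,x_{k+1})$, so its label sequence is exactly $\Upsilon(\bm{x})$. Hence $\Upsilon(\bm{x})\in\lt$.

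For the reverse inclusion $\lt\subseteq\{\Upsilon(\bm{s})\colon \bm{s}\in\mathcal{T}\}$, fix $(\ell_0,\ldots,\ell_{2n-2})\in\lt$ and a corresponding root-to-goal path. The path visits vertices with superscripts $j_0,j_1,\ldots,j_{n-1}\in\range{1}{n_r}$, and the trellis structure forces $\ell_{2b}=d_{j_b}^2$ and $\ell_{2b+1}\in\Psi_{j_b,j_{b+1}}$. By~(\ref{eq:Psi_definition}), for each $b\in\range{0}{n-2}$ there are constellation points $d_{j_b}e^{i\phi_b}\in\mathcal{D}_{j_b}$ and $d_{j_{b+1}}e^{i\phi'_b}\in\mathcal{D}_{j_{b+1}}$ with $\psi(d_{j_b}e^{i\phi_b},d_{j_{b+1}}e^{i\phi'_b})=\ell_{2b+1}$. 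I then build $\bm{x}\in\krnp^n$ with radii $d_{j_0},\ldots,d_{j_{n-1}}$ and phases defined recursively by $\theta_0=0$ and $\theta_{b+1}=\theta_b+(\phi'_b-\phi_b)\bmod 2\pi$. Because $\Phi_{n_p}$ is a subgroup of $\mathbb{R}/2\pi\mathbb{Z}$ (hence closed under addition and subtraction mod $2\pi$), each $\theta_k\in\Phi_{n_p}$, so $x_k\triangleq d_{j_k}e^{i\theta_k}\in\krnp$. Since $\psi$ depends only on the two magnitudes and the cosine of the phase difference of its arguments (the remark following~(\ref{eq:psi}), together with~(\ref{eq:psirotate})) and $\arg(x_b x_{b+1}^\ast)\equiv\phi_b-\phi'_b\pmod{2\pi}$ by construction, we get $\psi(x_b,x_{b+1})=\psi(d_{j_b}e^{i\phi_b},d_{j_{b+1}}e^{i\phi'_b})=\ell_{2b+1}$; combined with $|x_b|^2=d_{j_b}^2=\ell_{2b}$ this yields $\Upsilon(\bm{x})=(\ell_0,\ldots,\ell_{2n-2})$. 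Finally, since the SLD-transversal $\mathcal{T}$ meets every square-law equivalence class, some $\bm{s}\in\mathcal{T}$ satisfies $\bm{s}\equiv\bm{x}$, i.e.\ $\Upsilon(\bm{s})=\Upsilon(\bm{x})=(\ell_0,\ldots,\ell_{2n-2})$, completing the inclusion.

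The main obstacle is the realizability step: the ISI-present labels constrain only phase \emph{differences}, so the phases must be stitched together into a globally consistent assignment that remains inside $\Phi_{n_p}$. The group structure of $\Phi_{n_p}$ under addition modulo $2\pi$ is exactly what makes the recursion close up, while the rotation invariance~(\ref{eq:psirotate}) of $\psi$ is what permits the free choice of the global offset $\theta_0=0$. Everything else is bookkeeping against the edge-set definitions of Sec.~\ref{subsec:sqam_trellis}.
\end{IEEEproof}
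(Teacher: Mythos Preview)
Your proof is correct and follows essentially the same route as the paper, which dispatches the result in one line by appealing to ``the definitions of $\mathcal{A}_\text{ISI-free}$ and $\mathcal{A}_\text{ISI-present}$ and Theorem~\ref{thm:ptlt}.'' You have simply unpacked what that sentence means: the forward inclusion is immediate from the edge-label definitions, and the reverse inclusion requires the realizability step (stitching the local phase-difference constraints into a global vector in $\krnp^n$ via the group structure of $\Phi_{n_p}$), which the paper leaves implicit. Your argument does not invoke Theorem~\ref{thm:ptlt} at all, and indeed it is not needed for the set equality---its only possible role in the paper's sketch is to recover the radius indices $j_0,\ldots,j_{n-1}$ unambiguously from the label sequence, which you handle directly by reading them off the path.
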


\begin{IEEEproof}
This follows from the definitions of $\mathcal{A}_\text{ISI-free}$ and
$\mathcal{A}_\text{ISI-present}$ and Theorem~\ref{thm:ptlt}.
\end{IEEEproof}

\begin{corollary}
\label{cor:total_number}
The maximum rate that can be achieved by an $(n_r,n_p)$-SQAM
constellation with the system described in Sec.~\ref{sec:system_model}, as $n\rightarrow\infty$,
is $\log_2\left(n_r\left\lceil\frac{n_p+1}{2}\right\rceil\right)$ b/sym. 
\end{corollary}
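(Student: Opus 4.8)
The plan is to reduce the claim to the counting results already established and then take an elementary limit. First I would fix the block length $n$ and observe that, since a codebook $\mathcal{S}$ is by definition a set of pairwise SLD $n$-tuples, a block of $n$ symbols carries $\log_2|\mathcal{S}|$ bits, so the rate is $\frac{1}{n}\log_2|\mathcal{S}|$ b/sym. Because every codebook is contained in some SLD-transversal, the largest codebook available at block length $n$ is an SLD-transversal $\mathcal{T}$ itself, and the maximum rate at block length $n$ equals $\frac{1}{n}\log_2|\mathcal{T}|$.

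Next I would evaluate $|\mathcal{T}|$. By definition the elements of $\mathcal{T}$ have pairwise distinct signatures, so $|\mathcal{T}| = |\{\Upsilon(\bm{s}) \colon \bm{s}\in\mathcal{T}\}|$, and by Theorem~\ref{thm:unq_label} this set of signatures is exactly $\mathcal{L}_{\mathbb{T}}$; hence $|\mathcal{T}| = |\mathcal{L}_{\mathbb{T}}|$. Chaining Theorem~\ref{thm:ptlt} (which gives $|\mathcal{L}_{\mathbb{T}}| = |\mathcal{P}_{\mathbb{T}}|$) with Theorem~\ref{thm:num_path} (which gives $|\mathcal{P}_{\mathbb{T}}| = \lceil\frac{n_p+1}{2}\rceil^{n-1}n_r^n$), the maximum rate at block length $n$ is
\[
\frac{1}{n}\log_2\left(\left\lceil\frac{n_p+1}{2}\right\rceil^{n-1}n_r^n\right)
= \frac{n-1}{n}\log_2\left\lceil\frac{n_p+1}{2}\right\rceil + \log_2 n_r.
\]

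Finally I would let $n\to\infty$: the factor $(n-1)/n$ tends to $1$, so the right-hand side converges to $\log_2\lceil\frac{n_p+1}{2}\rceil + \log_2 n_r = \log_2\left(n_r\lceil\frac{n_p+1}{2}\rceil\right)$, which is the stated value. I do not expect a genuine obstacle, since every ingredient is an already-proved theorem together with a one-line limit; the only point that needs care is interpretive rather than technical, namely that ``maximum rate'' here refers to the combinatorial size of the largest admissible codebook (equivalently, the number of distinguishable signatures), not to a noisy-channel capacity — reliability against shot and thermal noise is treated separately via the branch-metric and Viterbi decoding of Sec.~\ref{subsec:trellis_decoding}.
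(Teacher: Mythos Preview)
Your proposal is correct and follows exactly the approach the paper takes: the paper's proof simply says ``This follows Theorems~\ref{thm:num_path}, \ref{thm:ptlt}, and~\ref{thm:unq_label},'' and you have spelled out precisely how those three results chain together with the elementary limit $(n-1)/n\to 1$.
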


\begin{proof}
This follows Theorems~\ref{thm:num_path}, \ref{thm:ptlt},
and~\ref{thm:unq_label}.
\end{proof}

Note that the maximum achievable rate for an $(n_r,n_p)$-SQAM
constellation under coherent detection is $\log_2\left(n_rn_p\right)$
and 
\begin{equation*}
0<\log_2\left(n_rn_p\right)-\log_2\left(n_r\left\lceil\frac{n_p+1}{2}\right\rceil\right)<1.
\end{equation*}

According to Theorem~\ref{thm:unq_label}, SLD-transversals are all
mapped to the set $\lt$ by the function $\Upsilon$. Thus every
SLD-transversal is equivalent from the perspective of the detector,
\emph{i.e.}, all of SLD-transversals will have the same error rate.  In
the following, we describe an algorithm that uses the trellis
$\mathbb{T}$ to determine a particular SLD-transversal.

An $n$-vector $(s_0,\ldots,s_{n-1})\in \krnp^n$ is called
\emph{standard} if
\begin{enumerate}
\item $s_0\in\mathbb{R}^{>0}$, and
\item $\arg(s_{\ell+1}s_\ell^\ast)\in[0,\pi], \text{ for every }\ell\in\range{0}{n-2}$.
\end{enumerate}
In other words, the first component of a standard vector is a positive
real number, and the argument of each component is obtained from the
argument of the previous component by adding (modulo $2\pi)$ an angle in
the range $[0,\pi]$.

Define a family of functions
\[
\varphi(\cdot; a,b) \colon  \left[ \frac{3}{8}(a^2 + b^2) - \frac{ab}{4}, \frac{3}{8}(a^2 + b^2) + \frac{ab}{4}\right]
\to [0, \pi],
\]
indexed by positive real numbers $a$ and $b$, and given as
\[
\varphi(\zeta; a,b) = \arccos\left(\frac{8\zeta-3(a^2+b^2)}{2ab}\right).
\]
Given two complex numbers $z_a$ and $z_b$ such that $|z_a| = a$, $|z_b|
= b$, and $\psi(z_a,z_b) = \zeta$, there are at most two possible values
of $\arg( z_a z_b^{\ast})$, namely $\varphi(\zeta; a,b) \in [0, \pi]$
and $2\pi - \varphi(\zeta; a,b) \in [\pi, 2\pi]$.  The function
$\varphi(\cdot; a,b)$ is defined to select the first of these
possibilities.

\begin{theorem}
\label{thm:standard_class_representative}
Each equivalence class (under square-law equivalence)
$\mathcal{C}\subset\krnp^n$ contains exactly one standard vector.
\end{theorem}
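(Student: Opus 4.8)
The plan is to show existence and uniqueness separately, using the explicit description of $\varphi$ and the signature map $\Upsilon$. For existence, I would start from an arbitrary vector $\bm{x}=(x_0,\ldots,x_{n-1})\in\mathcal{C}$ and construct a standard vector $\bm{s}$ with the same signature. First, multiply the whole vector by a global phase $e^{-i\arg(x_0)}$; by the rotation invariance $\psi(ae^{i\theta},be^{i\theta})=\psi(a,b)$ recorded in (\ref{eq:psirotate}) and since $|x_k|$ is clearly rotation-invariant, this leaves $\Upsilon$ unchanged, and now $s_0:=|x_0|e^{i\cdot 0}\in\mathbb{R}^{>0}$. Then I would build $s_1,s_2,\ldots,s_{n-1}$ recursively: having fixed $s_\ell$, set $\arg(s_{\ell+1}s_\ell^\ast)=\varphi(\psi(x_\ell,x_{\ell+1});|x_\ell|,|x_{\ell+1}|)\in[0,\pi]$ and $|s_{\ell+1}|=|x_{\ell+1}|$. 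I must check that this recursion produces a vector in $\krnp^n$ (i.e., that each $s_{\ell+1}$ lies on a ray in $\Phi_{n_p}$), that $\psi(s_\ell,s_{\ell+1})=\psi(x_\ell,x_{\ell+1})$, and that $|s_k|^2=|x_k|^2$; together these give $\Upsilon(\bm{s})=\Upsilon(\bm{x})$, so $\bm{s}\in\mathcal{C}$.

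For uniqueness, suppose $\bm{s}$ and $\tilde{\bm{s}}$ are both standard and both in $\mathcal{C}$, so $\Upsilon(\bm{s})=\Upsilon(\tilde{\bm{s}})$. Matching signature entries, $|s_k|=|\tilde s_k|$ for all $k$ and $\psi(s_\ell,s_{\ell+1})=\psi(\tilde s_\ell,\tilde s_{\ell+1})$ for all $\ell$. I would argue by induction on $\ell$ that $s_\ell=\tilde s_\ell$. The base case $s_0=\tilde s_0$ is immediate from $|s_0|=|\tilde s_0|$ and both being positive real. For the inductive step: from $|s_{\ell+1}|=|\tilde s_{\ell+1}|$ and $\psi(s_\ell,s_{\ell+1})=\psi(\tilde s_\ell,\tilde s_{\ell+1})$ together with $s_\ell=\tilde s_\ell$, the discussion preceding the theorem shows $\arg(s_{\ell+1}s_\ell^\ast)$ and $\arg(\tilde s_{\ell+1}\tilde s_\ell^\ast)$ must each be one of the two values $\varphi(\zeta;a,b)$ or $2\pi-\varphi(\zeta;a,b)$; but standardness (property 2) forces both to lie in $[0,\pi]$, hence both equal $\varphi(\zeta;a,b)$, so $s_{\ell+1}=\tilde s_{\ell+1}$. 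This closes the induction.

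The main obstacle is the existence half, specifically verifying that the recursively constructed $s_{\ell+1}$ actually lands on a valid SQAM ray, i.e., $\arg(s_{\ell+1})\in\Phi_{n_p}$. The point is that $\arg(x_{\ell+1})-\arg(x_\ell)\in\Phi_{n_p}$ (a difference of two multiples of $2\pi/n_p$ is again such a multiple, mod $2\pi$), and $\varphi(\psi(x_\ell,x_{\ell+1});|x_\ell|,|x_{\ell+1}|)$ recovers either this difference or its negative mod $2\pi$ — in either case a member of $\Phi_{n_p}$, since $\Phi_{n_p}$ is closed under negation mod $2\pi$. Combined with $\arg(s_\ell)\in\Phi_{n_p}$ (true inductively, with base case $\arg(s_0)=0\in\Phi_{n_p}$), it follows that $\arg(s_{\ell+1})\in\Phi_{n_p}$. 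I would also note that the value $\zeta=\psi(x_\ell,x_{\ell+1})$ automatically lies in the domain $\bigl[\tfrac{3}{8}(a^2+b^2)-\tfrac{ab}{4},\tfrac{3}{8}(a^2+b^2)+\tfrac{ab}{4}\bigr]$ of $\varphi(\cdot;a,b)$ with $a=|x_\ell|$, $b=|x_{\ell+1}|$, since $\psi(a e^{i\alpha},be^{i\gamma})=\tfrac14|a+be^{i(\gamma-\alpha)}|^2+\tfrac18|a-be^{i(\gamma-\alpha)}|^2 = \tfrac{3}{8}(a^2+b^2)+\tfrac{ab}{4}\cos(\gamma-\alpha)$, so the $\arccos$ is well-defined, which is precisely why $\varphi$ was defined on that interval.
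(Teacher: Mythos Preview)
Your proposal is correct and follows essentially the same route as the paper: construct the candidate standard vector recursively from the shared signature via $s_0=\sqrt{\zeta_0}$ and $\arg(s_{\ell+1}s_\ell^\ast)=\varphi(\zeta_{2\ell+1};\sqrt{\zeta_{2\ell}},\sqrt{\zeta_{2\ell+2}})$, then argue uniqueness from the fact that $\varphi$ singles out the only phase increment in $[0,\pi]$. Your write-up is in fact more careful than the paper's own proof on one point the paper leaves implicit: you explicitly verify that the constructed $\bm{s}$ lies in $\krnp^n$ by observing that $\Phi_{n_p}$ is closed under negation and addition modulo $2\pi$, so the recursively assigned phases remain in $\Phi_{n_p}$.
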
  

\begin{IEEEproof} 
The signature of each vector in $\mathcal{C}$ is the same, given as,
say, $\bm{\zeta}=(\zeta_0,\ldots,\zeta_{2n-2})\in\mathbb{R}^{2n-1}$.
Let 
\begin{equation} 
\label{eq:s1}
s_0=\sqrt{\zeta_0},
\end{equation} 
and, for $\ell \in \range{0}{n-2}$, let
\begin{equation}
\label{eq:s2} 
s_{\ell+1}=\frac{\sqrt{\zeta_{2\ell}\zeta_{2\ell+2}}}{s^\ast_{\ell}}
\exp\left(i\varphi\left(\zeta_{2\ell+1};\sqrt{\zeta_{2\ell}},\sqrt{\zeta_{2\ell+2}}\right)\right).
\end{equation} 
Let $\bm{s} = (s_0, \ldots, s_{n-1})$.  From (\ref{eq:s1}) and
$(\ref{eq:s2})$ we have $\Upsilon(\bm{s})=\bm{\zeta}$; thus,
$\bm{s}\in\mathcal{C}$.  Furthermore, by construction, $\bm{s}$ is
standard.  Uniqueness follows from the fact that the function $\varphi$
always chooses (from two possibilities) the unique phase difference
between successive components that results in a standard vector.
\end{IEEEproof}

Define $\Upsilon^{-1}: \lt \to \ts$ that, given a signature $(\zeta_0,
\ldots, \zeta_{2n-2}) \in \lt$ returns the standard vector with that
signature, as determined by (\ref{eq:s1}) and (\ref{eq:s2}).

The (unique) SLD-transversal containing only standard vectors, denoted
by $\ts$, is called the \emph{standard SLD-transversal}. The elements of
$\ts$ can be computed by applying $\Upsilon^{-1}$ to the label sequences
obtained by following every possible path from root to goal in the
trellis $\mathbb{T}$.  When $|\ts|$ is sufficiently small, the elements
of $\ts$ be stored in a lookup table, thereby allowing for an efficient
mapping between messages and codewords.   This is the approach followed
for the codebooks $\mathcal{S}$ studied in this paper, which are
obtained as power-of-two subsets of $\ts$.

For larger codebooks, the mapping between messages and codewords becomes
more complicated, possibly requiring enumerative encoding techniques to
select a suitable path through $\mathbb{T}$.  However, once such a path
is selected, computation of a standard vector is readily accomplished by
applying $\Upsilon^{-1}$ to the corresponding label sequence.
\end{subsection}

\begin{subsection}{Trellis Decoding}
\label{subsec:trellis_decoding}

In addition to determining $\ts$, the trellis $\mathbb{T}$ of
Sec.~\ref{subsec:sqam_trellis} can be used with the Viterbi algorithm to
detect the sequence $\hat{\bm{x}}\in\mathcal{S}$ from the noisy received
vectors $\bm{y}=(y_0,\ldots,y_{n-1})$ and $\bm{z}=(z_0,\ldots,z_{n-2})$.
For the reasons of simplicity noted in
Sec.~\ref{subsec:integrate_and_dump}, we use (\ref{eq:y_likelihood}) and
(\ref{eq:z_likelihood}) in branch-metric computation to approximate true
ML detection.

Consider the transmission of the standard vector corresponding to the
label sequence associated with some trellis path
$\bm{p}=(e_0,\ldots,e_{2n-2})\in\pt$.  From (\ref{eq:yk_aprx}),
(\ref{eq:y_bar}), (\ref{eq:zl_aprx}), (\ref{eq:z_bar}), and
Theorem~\ref{thm:unq_label} one may see that, in the absence of noise,
the received values are (to a close approximation) a scaled version of
the edge labels. In particular, in the noise-free case,
	$y_k\simeq\mathfrak{R}\,\beta e^{-\varrho L}\left(E_\text{in}\kappa\alpha_\beta\right)^2\lambda(e_{2k})$
and
	$z_\ell\simeq\mathfrak{R}\,(1-\beta)e^{-\varrho L}\left(E_\text{in}\kappa\alpha_\beta\right)^2
\lambda(e_{2\ell+1})$, for any $k\in\range{0}{n-1}$ and
$\ell\in\range{0}{n-2}$.

Now in the presence of noise, from (\ref{eq:y_likelihood}) and
(\ref{eq:z_likelihood}) it follows that the mean and the variance of the
received $\bm{y}$ and $\bm{z}$ entries are, respectively, linear and
affine functions of the entries of the label sequence $\lambda(\bm{p})$
in our approximation.  In particular, for any edge $e \in \mathcal{E}$,
let $\mathsf{Mean}:\mathcal{E}\rightarrow\mathbb{R}$, and
$\mathsf{Var}:\mathcal{E}\rightarrow\mathbb{R}$ be defined as
\begin{equation*}
	\mathsf{Mean}(e)=\mathfrak{R}\,e^{-\varrho L}\left(E_\text{in}\kappa\alpha_\beta\right)^2\lambda(e),
\end{equation*}
and
\begin{equation*}
	\mathsf{Var}(e)=e^{-\varrho L}\left(E_\text{in}\kappa\alpha_\beta\right)^2\sigma_\text{sh}^2\lambda(e)+T\sigma_\text{th}^2.
\end{equation*}
Thus for an edge $e_{2j}$ in the $(2j$)th trellis section, $j \in
\range{0}{n-1}$, incident from a vertex of even depth, the mean and the
variance of $y_j$ are given as $(1-\beta)\mathsf{Mean}(e_{2j})$ and
$(1-\beta)\mathsf{Var}(e_{2j})$, respectively.  Similarly for an edge
$e_{2j+1}$ in the $(2j+1)$th trellis section, $j\in\range{0}{n-2}$,
incident from a vertex of odd depth, the mean and the variance of $z_j$
are given as $\beta\mathsf{Mean}(e_{2j+1})$ and
$\beta\mathsf{Var}(e_{2j+1})$, respectively.

We define two branch-metric functions,
$w_\text{ISI-free}:\mathbb{R}\times\mathcal{E}\rightarrow\mathbb{R}$,
and
$w_\text{ISI-present}:\mathbb{R}\times\mathcal{E}\rightarrow\mathbb{R}$,
as follows.  Let
\begin{align*}
w_\text{ISI-free}(y,e)=\frac{\big(y-(1-\beta)\mathsf{Mean}(e)\big)^2}{(1-\beta)\mathsf{Var}(e)}+\ln\left(\mathsf{Var}(e)\right),
\end{align*}
and let
\begin{align*}
w_\text{ISI-present}(z,e)=\frac{\big(z-\beta\mathsf{Mean}(e)\big)^2}{\beta\mathsf{Var}(e)}+\ln\left(\mathsf{Var}(e)\right),
\end{align*}
for any $y$ and $z\in\mathbb{R}$ and any $e\in\mathcal{E}$.

To detect $\hat{\bm{x}}$ from $\bm{y}$ and $\bm{z}$, one may run the
Viterbi algorithm on $\mathbb{T}$ with branch metrics defined above to
determine the path $\bm{p}_\text{mw}\in\pt$ of minimum total weight,
computing
\begin{multline*}
\bm{p}_\text{mw}=\underset{(e_0,\ldots,e_{2n-2})\in\pt}{\arg\min}\\
\sum_{j\in\range{0}{n-1}}w_\text{ISI-free}(y_j,e_{2j})+
\sum_{j\in\range{0}{n-2}}w_\text{ISI-present}(z_j,e_{2j+1}).
\end{multline*}
The detected codeword is then
$\hat{\bm{x}}=\Upsilon^{-1}(\lambda(\bm{p}_\text{mw}))$, provided that
$\hat{\bm{x}} \in \mathcal{S}$.  In case $\hat{\bm{x}} \not\in
\mathcal{S}$, a \emph{decoding failure} is declared.  In practice,
$\mathcal{S}$ is selected to have as many codewords from $\ts$ as
possible, while still having a size that is a power of two.  Thus the
majority of paths in $\pt$ are codewords, and decoding failures are
rare.  Indeed, the numerical simulations of Sec.~\ref{sec:simulation} do
indeed confirm that, at the launch powers required to achieve reasonably
small decoding error rates, the probability of decoding failure is
negligible.
\end{subsection}
\end{section}


\begin{section}{Numerical Simulation}
\label{sec:simulation}
In this section, the schemes described in sections
\ref{sec:system_model} and \ref{sec:trellis} are verified by numerical
simulations. 

\begin{subsection}{Simulation Setup}

Because we consider square-law detection with both shot noise and
thermal noise, a notion of signal-to-noise ratio (SNR), as would be
appropriate for linear additive white Gaussian channels, is not
appropriate here.  Accordingly, the various figures of merit considered
here are shown as functions of launch power.  We use practical values
for $\sigma_\text{sh}^2$ and $\sigma_\text{th}^2$ in all cases.

In all simulations we have assumed a transmission length of
$L=10~\text{km}$.  Furthermore, \emph{laser power} refers to the
mean-squared power of the unmodulated waveform, \textit{i.e.},
$\frac{1}{2}E_\text{in}^2$.

In~\cite{tukey} it is shown that $\beta=0.9$ provides near-optimal error
rates and achievable rate performance versus launch power. However,
$\beta=0.9$ requires integration intervals as short as $10\%$ of a
symbol period, which may be difficult to implement at high symbol rates.
To tackle this issue, we have used $\beta=0.5$ in all simulations of
this paper, corresponding to an integration period that is $50\%$ of the
symbol period for both the ISI-free and ISI-present signalling segments.

The practical ease of $\beta=0.5$ comes at the expense of a worse error rate
and a larger bandwidth.  Note that Tukey waveforms are time-limited, thus,
band-unlimited. Therefore, the typical notion of bandwidth for time-limited
waveforms, \textit{e.g.}, sinc and raised-cosine waveforms, does not apply to
Tukey waveforms.  For this reason, we use $95\%$ in-band-energy as the
bandwidth criterion for comparing Tukey waveforms with each other.  While
$w_{0.9}(\cdot)$ has a $15\%$ spectral overhead compared to the minimum
required bandwidth for Nyquist signalling, the spectral overhead of
$w_{0.5}(\cdot)$ is $33.6\%$.  However, spectral efficiency  may not be
an important performance criterion in some applications, \textit{e.g.},
in data-center interconnects, and so
the spectral overhead of Tukey waveforms may not be an important concern
in such applications.
\end{subsection}

\begin{subsection}{The Photodiode}
\label{subsec:photodiodesim}

The simulations in this chapter are done by assuming the use of an
InGaAs p-i-n photodiode at the receiver. This choice gave better
performance than avalanche photodiodes in our simulations.  The reason
for its better performance is that, as explained in
Sec.~\ref{subsec:integrate_and_dump}, the performance of direct
detection schemes depends on the symbol rate and at sufficiently large
symbol rates  shot noise dominates thermal noise. As noted by Agrawal,
``the SNR of APD receivers is worse than that of p-i-n receivers when
shot noise contribution is concerned.''~\cite{agrawal}. 

The two-sided power spectral density of the thermal noise is~\cite{agrawal}
\begin{equation*}
\sigma_\text{th}^2=\frac{2\mathsf{k}T_k}{R_L},
\end{equation*}
where $\mathsf{k}$ is the Boltzmann constant, $T_k$ is the absolute
temperature, and $R_L$ is the load resistor.  In the simulations we have
used $T_k=300$~K and $R_L=300~\Omega$.  Furthermore, the
	two-sided power spectral density of $n_\text{sh}(t)$ of a p-i-n photodiode
is $\sigma_\text{sh}^2=\mathsf{e}\mathfrak{R}$, where $\mathsf{e}$ is
the elementary charge~\cite{agrawal}.  The typical range for the responsivity
of an InGaAs p-i-n photodiode is
$0.6-0.9~\text{mA/mW}$~\cite{agrawal}; in the
simulations, we have used
$\mathfrak{R}=0.75~\text{mA/mW}$.
\end{subsection}

\begin{subsection}{Codebook-size Constraint}
In all of our numerical simulations, we have assumed that $|\mathcal{S}|=M$ is a power of two.  For example,
for the $(8,4)$-SQAM constellation and with a block length of $5$,
we have chosen $2^{21}=2097152$ of the
$\left|\ts\right|=2654208$ available
SLD sequences to form $\mathcal{S}$.
Under this power-of-two restriction, the maximum achievable rate
of the proposed scheme using $(n_r,n_p)$-SQAM with a block length
$n$ is 
\[\frac{1}{n}\left\lfloor\log_2\left(
\left\lceil\frac{n_p+1}{2}\right\rceil^{n-1}n_r^n
\right)\right\rfloor\quad\text{b/sym}.\] 

\end{subsection}

\begin{subsection}{Optimizing Ring Spacing}
\label{subsec:constellation_design}

A pair of $n_r$ and $n_p\in\mathbb{Z}^{\geq 2}$ does not uniquely
specify an $(n_r,n_p)$-SQAM constellation; one must also specify the
radius set $\mathcal{D}$.  In this section, for a fixed $n_r$ and $n_p$
we try to find the best value of $\delta\in\mathbb{R}^{>0}$ such that
\begin{equation}
\mathcal{D}=\left\{1,1+\delta,1+2\delta,\ldots,1+(n_r-1)\delta\right\}.
\label{eq:ring_set}
\end{equation}
Fig.~\ref{fig:8-4_spacing} shows the BER of $(8,4)$-SQAM constellations
for different values of $\delta$. For these constellations
$M=|\mathcal{S}|=2^{12}$ and $n=3$; thus, their maximum achievable rate
is $4$ b/sym. One can see that among the tested $\delta$ values,
$\delta=0.2$ has the best performance.

Table~\ref{tab:ring_spacing} shows the best $\delta$ for different
constellations. The tested values of $\delta$ are similar to those of
Fig.~\ref{fig:8-4_spacing}.  
The criteria for choosing the best $\delta$
is the one which results in the least launch power at BER of $10^{-3}$.
The rows which are not highlighted are those for which we can achieve at
least as great a rate, in b/sym, with another $(n_r,n_p)$-SQAM
constellation at the same symbol rate, yet with smaller launch power.  For
example, the row of $(2,5)$-SQAM constellation is not highlighted as at
$50$~Gbaud it achieves a rate of about $2.2$~b/sym at a launch
power of $-9$~dBm while $(3,2)$-SQAM constellation achieves $2.43$~b/sym
at a launch power of $-12.2$~dBm. Thus, we are interested only in
highlighted rows. 

Theorem~\ref{thm:num_path} implies that the maximum codebook size,
$|\ts|$, depends on $\left\lceil\frac{n_p+1}{2}\right\rceil$. As a
result, for a fixed block length $n$ and a radius set $\mathcal{D}$, an
$(n_r,n_p)$-SQAM constellation with an even $n_p$ results in the same
transversal cardinality as an $(n_r,n_p+1)$-SQAM constellation.
Therefore, as the ring points of the former constellation are further
apart than those of the latter one, we expect the $(n_r,n_p)$-SQAM
constellation to have a better performance than the $(n_r,n_p+1)$
constellation.  Table~\ref{tab:ring_spacing} does indeed support this
claim, as all highlighted rows have an even $n_p$.
\end{subsection}

\begin{figure}[t]
\centering
\includegraphics[scale=0.666666]{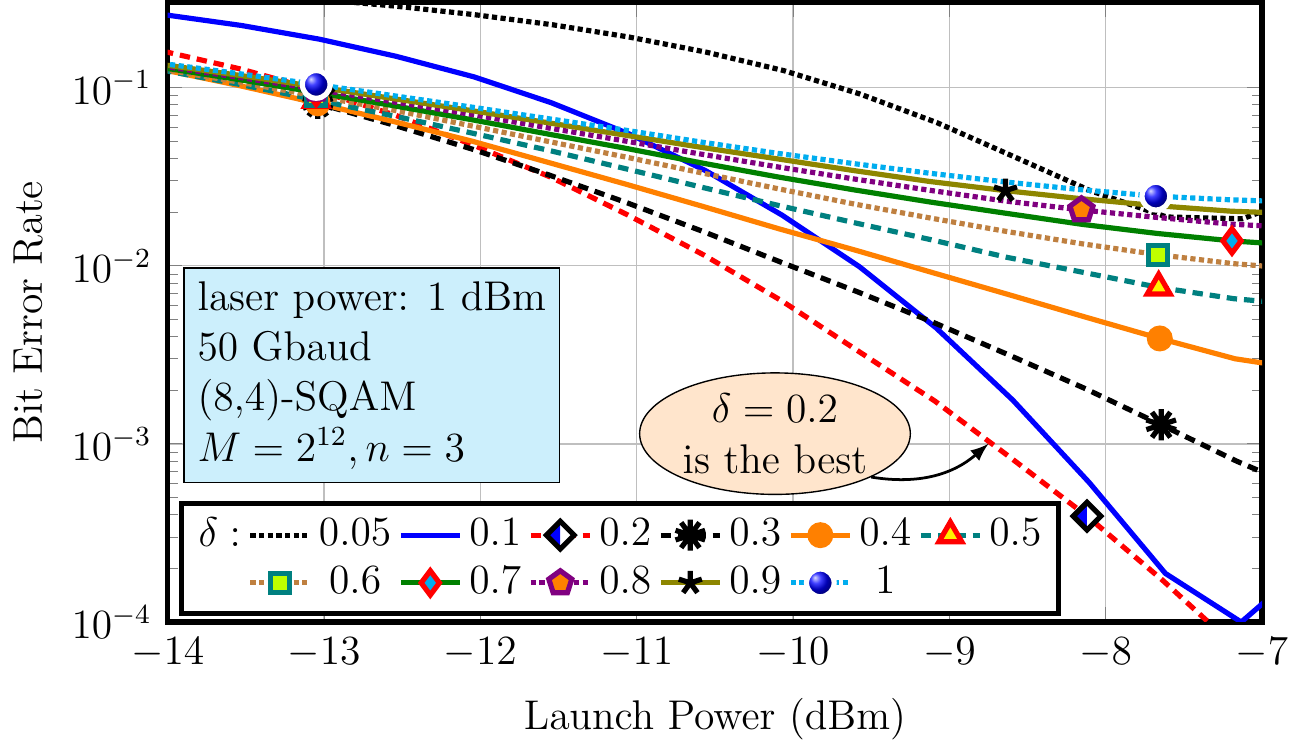}
\caption{BER curves for several $(8,4)$-SQAM constellations with $n=3$.}
\label{fig:8-4_spacing}
\end{figure}

\begin{subsection}{Bit Error Rate}
\label{subsec:ber}

Fig.~\ref{fig:ber_50G} shows the trellis decoding bit error rate of the
proposed scheme for the constellations highlighted in
Table~\ref{tab:ring_spacing} at $50~$Gbaud and with $1~$dBm laser
power.  As it is apparent, the BER curves have two parts: 1) a roll-down
part where the BER decreases with a higher launch power, and 2) a
roll-up part which behaves in the opposite manner.  The reason for the
roll-up part is that the proposed decoder approximates the transmitted
waveform as (\ref{eq:aprx_driver}), while the waveform actually suffers
from nonlinear distortion according to (\ref{eq:driver_signal}).  As
mentioned in~\ref{subsec:integrate_and_dump}, this is a good
approximation when the power of the modulating waveform, $u(t)$, is
relatively small.  For a fixed laser power, the launch power is
controlled by the modulating-waveform power; \textit{i.e.,} a high
launch-power demands a higher modulating-waveform power which degrades
the accuracy of approximating (\ref{eq:driver_signal}) with
(\ref{eq:aprx_driver}).

Fortunately, a practical raw BER, \textit{e.g.,} about $10^{-3}$, can be
achieved in the roll-down area for all considered constellations, except
$(12,4)$-,  $(13,4)$-, and $(16,4)$-SQAM constellations. 
This is the reason that some cells corresponding to those particular constellations
in Table~\ref{tab:ring_spacing} are
left blank. For example, the smallest BER obtained by a $(16,4)$-SQAM constellation
at $50$~Gbaud is about $0.011$, is achieved at a launch power 
about $-7.9~$dBm and with $\delta=0.2$.

As explained in section~\ref{subsec:trellis_decoding}, the power-of-two constraint may
cause the trellis decoder to declare a decoding failure.
Fig.~\ref{fig:decoding_failure} shows the decoding failure probability
for the same constellations as Fig.~\ref{fig:ber_50G}.  The
constellations which are absent in Fig.~\ref{fig:decoding_failure} but
are present in Fig.~\ref{fig:ber_50G} had no decoding failure at the
tested launch power values.  By comparing
Fig.~\ref{fig:decoding_failure} with \ref{fig:ber_50G} one may see that
at launch power values with a practically small BER, \textit{e.g.},
$10^{-3}$, the decoding failure rate is negligible.

As noted in Sec.~\ref{subsec:integrate_and_dump}, unlike a typical AWGN
communication scheme, the performance under direct detection depends on
the symbol rate. This fact is illustrated in Fig.~\ref{fig:ber_25G} which
provides the BER curves in the same setup as Fig.~\ref{fig:ber_50G},
except that it is at $25$~Gbaud.  By comparing these two figures
one may see that for a fixed constellation, a BER of $10^{-3}$ can be
achieved with approximately $1.5~$dB less launch power at $25~$Gbaud
than at $50~$Gbaud.

Note that in computing BER we treat decoding failures as equivalent to a
momentary BER of $0.5$.  We made no attempt to optimize (\textit{e.g.},
by Gray labelling) the mapping of sequences $\bm{x}\in\mathcal{S}$ to
binary vectors of length $\log_2(|\mathcal{S}|)$.
\end{subsection}

\begin{table}[t]
\centering
\caption{Optimal ring spacing for different star-QAM constellations.}
\label{tab:ring_spacing}
\setlength\tabcolsep{1.1pt}
\begin{tabular}{|c|c|c|c|c|c|}
\hline
\rotatebox{90}{$\bm{(n_r,n_p)}$}& 
$\bm{n}$&
\rotatebox{90}{\textbf{max rate~(b/sym)}} & 
\rotatebox{90}{\begin{tabular}{c}\textbf{launch power}\\\textbf{(dBm)}\\\textbf{at 50~Gbaud}\end{tabular}}& 
\rotatebox{90}{\begin{tabular}{c}\textbf{launch power}\\\textbf{(dBm)}\\\textbf{at 25~Gbaud}\end{tabular}}& 
\rotatebox{90}{\textbf{optimal $\bm{\delta}$}}\\
\hline
\hline
$(2,3)$	&$5$	&$1.8$	&$-12.6$ 	&$14.1$	&$0.2$ \\
\hline
\rowcolor{pink}	$(2,2)$	&$7$	&$1.86$ 	& $-13.25$	&$-14.75$	&$0.2$\\
\hline
$(2,4)$	&$4$	&$2$ 	& $-10.6$	&$-12$	&$0.2$\\
\hline
$(2,5)$	&$5$	&$2.2$	& $-9$	&$-10.5$	& $0.2$\\
\hline
$(3,3)$	&$4$	&$2.25$	&$-12.1$	&$-13.5$	&$0.2$\\
\hline
\rowcolor{pink}	$(3,2)$ &$7$	&$2.43$ & $-12.2$ &$-13.6$ & $0.2$\\ 
\hline 
$(2,6)$	&$4$	&$2.5$	& $-8.2$	&$-9.7$	&$0.1$\\
\hline
$(4,3)$	&$4$	&$2.75$	&$-11.8$	&$-13.3$	&$0.2$\\
\hline
$(3,5)$	&$4$	&$2.75$	& $-9$	&$-10.4$	&$0.1$\\
\hline
$(3,4)$	&$4$	&$2.75$	& $-10.3$	&$-11.8$	&$0.2$\\
\hline
\rowcolor{pink}	$(4,2)$ &$6$	&$2.83$ & $-12.2$ &$-13.6$ 	&$0.2$\\
\hline
\rowcolor{pink}	$(5,2)$ &$5$	&$3$	&$-11.2$&$-12.7$	&$0.2$\\
\hline
$(5,3)$	&$4$	&$3$	&$-11$	&$-12.5$	&$0.2$\\
\hline
$(6,3)$	&$3$	&$3$	&$-10.9$	&$-12.3$	&$0.2$\\
\hline
$(4,4)$	&$4$	&$3$	&$-10$	&$-11.6$	&$0.2$\\
\hline
$(4,5)$	&$3$	&$3$	&$-8.8$	&$-10.3$	&$0.1$\\
\hline
$(3,6)$	&$4$	& $3$	& $-8.2$	&$-9.7$	&$0.1$\\
\hline
\rowcolor{pink}	$(6,2)$ &$5$	& $3.2$ & $-10.8$	&$-12.3$	&$0.2$\\
\hline
\rowcolor{pink}	$(7,2)$	&$3$	&$3.33$	&$-10.3$	&$-11.7$	&$0.2$\\
\hline
$(7,3)$	&$3$	&$3.33$	&$-10.2$	&$-11.6$	&$0.2$\\
\hline
$(5,4)$	&$3$	&$3.33$	&$-9.8$	&$-11.3$	&$0.2$\\
\hline
$(6,4)$	&$3$	&$3.33$	&$-9.7$	&$-11.1$	&$0.2$\\
\hline
$(5,5)$	&$3$	&$3.33$	&$-8.6$	&$-9.9$	&$0.1$\\
\hline
$(6,5)$	&$3$	&$3.33$	&$-8.4$	&$-9.8$	&$0.1$\\
\hline
\end{tabular}
\begin{tabular}{|c|c|c|c|c|c|}
\hline
\rotatebox{90}{$\bm{(n_r,n_p)}$}& 
$\bm{n}$&
\rotatebox{90}{\textbf{max rate~(b/sym)}} & 
\rotatebox{90}{\begin{tabular}{c}\textbf{launch power}\\\textbf{(dBm)}\\\textbf{at 50~Gbaud}\end{tabular}}& 
\rotatebox{90}{\begin{tabular}{c}\textbf{launch power}\\\textbf{(dBm)}\\\textbf{at 25~Gbaud}\end{tabular}}& 
\rotatebox{90}{\textbf{optimal $\bm{\delta}$}}\\
\hline
\hline
$(4,6)$	&$3$	&$3.33$	&$-8$	&$-9.7$	&$0.1$\\
\hline
$(5,6)$	&$3$	&$3.33$	&$-8$	&$-9.6$	&$0.1$\\
\hline
$(7,2)$	&$5$	&$3.6$	&$-10$	&$-11.5$	&$0.2$\\
\hline
$(9,3)$	&$3$	&$3.67$	&$-9.3$	&$-10.8$	&$0.2$\\
\hline
$(8,3)$	&$3$	&$3.67$	&$-9.9$	&$-11.4$	&$0.2$\\
\hline
$(10,3)$	&$3$	&$3.67$	&$-9.1$	&$-10.6$	&$0.2$\\
\hline
$(7,5)$	&$3$	&$3.67$	&$-8.1$	&$-9.4$	&$0.1$\\
\hline
$(7,4)$	&$3$	&$3.67$	&$-9.3$	&$-10.8$	&$0.2$\\
\hline
\rowcolor{pink}	$(8,2)$	&$5$	&$3.8$	&$-10.2$&$-11.7$	&$0.2$\\ 
\hline
$(8,3)$	&$5$	&$3.8$	&$-9.9$	&$-11.4$	&$0.2$\\
\hline
\rowcolor{pink}	$(8,4)$	&$3$	&$4$	&$-8.8$	&$-10.3$	&$0.2$\\
\hline
$(11,3)$	&$3$	&$4$	&$-8.5$	&$-9.9$	&$0.2$\\
\hline
$(9,4)$	&$3$	&$4$	&$-8.6$	&$-10.1$	&$0.2$\\
\hline
$(12,3)$	&$3$	&$4$	&$-8.5$	&$-9.9$	&$0.2$\\
\hline
$(9,5)$	&$3$	&$4$	&$-7.6$	&$-8.9$	&$0.1$\\
\hline
$(8,5)$	&$3$	&$4$	&$-7.9$	&$-9.3$	&$0.1$\\
\hline
$(7,6)$	&$3$	&$4$	&$-7.7$	&$-9.2$	&$0.1$\\
\hline
\rowcolor{pink}	$(8,4)$	&$5$	&$4.2$	&$-8.4$	&$-9.9$	&$0.2$\\
\hline
\rowcolor{pink}	$(10,4)$	&$3$	&$4.33$	&$-8$	&$-9.5$	&$0.2$\\
\hline
$(11,4)$	&$3$	&$4.33$	&$-8$	&$-9.5$	&$0.2$\\
\hline
$(12,4)$	&$3$	&$4.33$	&---	&$-9.2$	&$0.2$\\
\hline
$(8,6)$	&$3$	&$4.33$	&$-7.4$	&$-9$	&$0.1$\\
\hline
\rowcolor{pink}	$(13,4)$	&$3$	&$4.67$	&---	&$-8.6$	&$0.2$\\
\hline
$(16,4)$	&	$3$	&$5$	&---	&---	&$0.2$\\
\hline
\end{tabular}
\end{table}

\begin{figure}[t]
\centering
\includegraphics[scale=0.6666666]{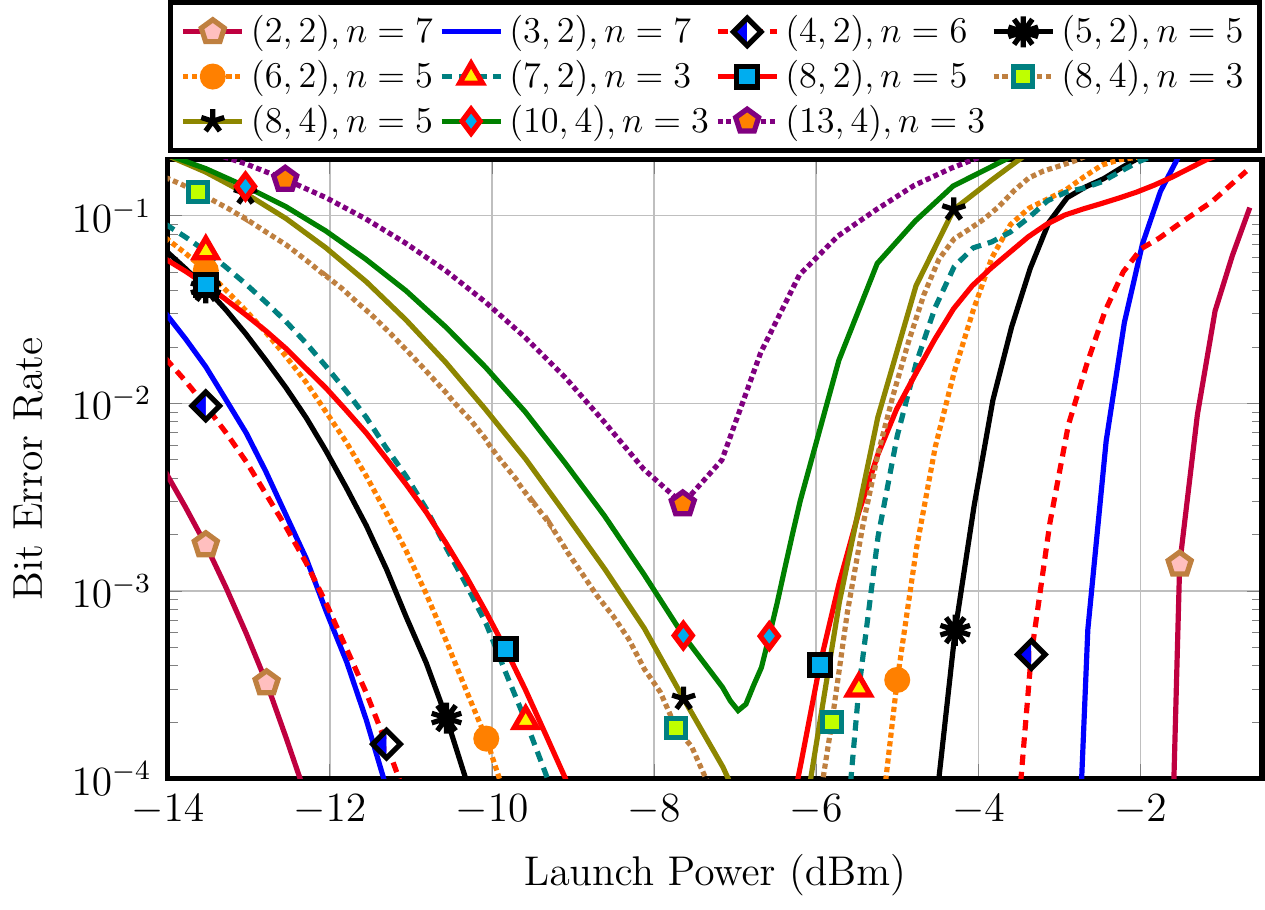}
\caption{BER curves for the constellations highlighted in 
Table~\ref{tab:ring_spacing} at $50$~Gbaud and with $1~$dBm laser power.}
\label{fig:ber_50G}
\end{figure}
\begin{figure}
\centering
\includegraphics[scale=0.666666]{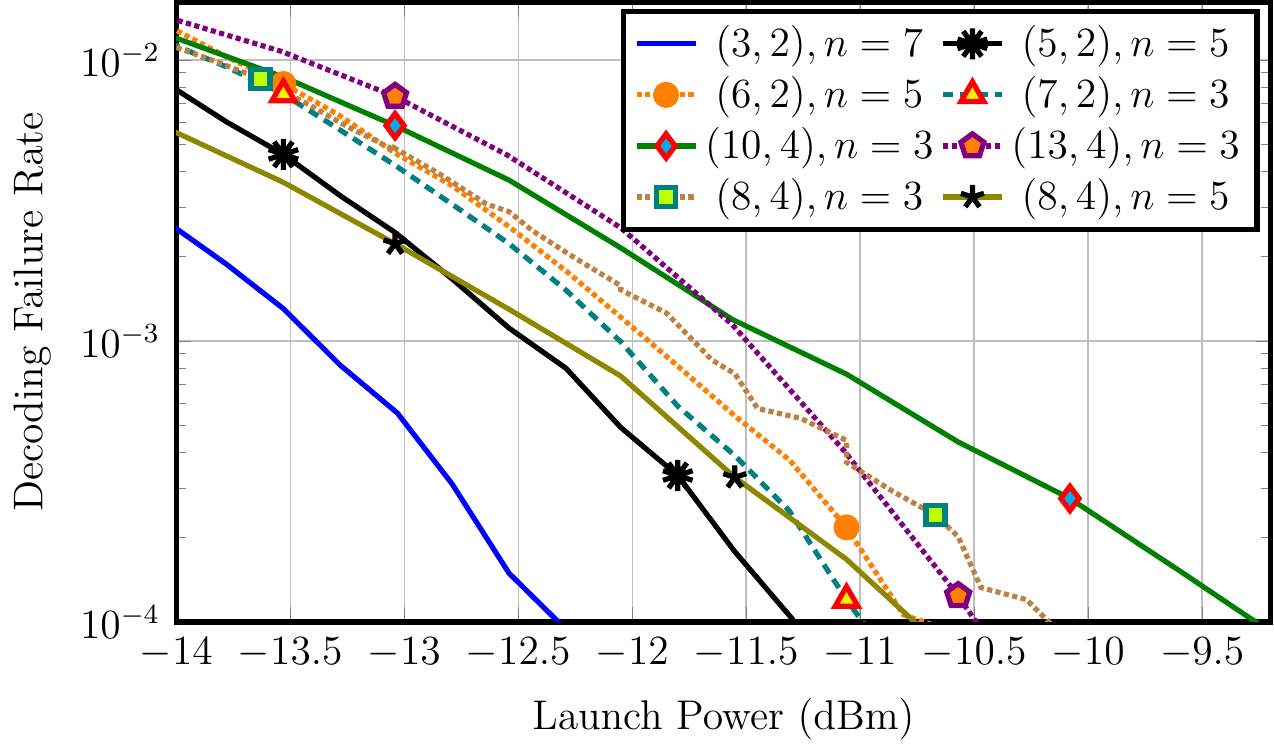}
\caption{Decoding failure rate for the constellations highlighted in
Table~\ref{tab:ring_spacing}, at $50$~Gbaud and with $1~$dBm
laser power. The constellations that are not present in this figure
had no decoding failure at the tried launch powers.}
\label{fig:decoding_failure}
\end{figure}

\begin{subsection}{Achievable Data Rate}
\label{subsec:achievable_rate}

Fig.~\ref{fig:achievable_rate} shows the theoretically maximum achievable data
rate for a few SQAM constellations at $50$~Gbaud and with a $1$~dBm laser
power, obtained by Monte Carlo simulations.  Except the $(16,4)$-SQAM
constellation with block length $n=3$, the remaining constellations are chosen
from the ones highlighted in Table~\ref{tab:ring_spacing} but possibly with a
different block length $n$.  For the $(16,4)$-SQAM constellation we have
$\delta=0.2$ in (\ref{eq:ring_set}).

Similar to the BER figures, the rate figures are obtained by the
power-of-two constraint on $|\mathcal{S}|$.  Without this constraint,
the maximum achievable rate of each constellation can be increased up to
$\frac{1}{n}$ bits.  For example, for the $(5,2)$-SQAM constellation
with $n=4$, the maximum achievable rate by using all SLD sequences is
$\log_2|\ts|\simeq 3.07$~b/sym.  However, as only $2^{12}=4096$ symbols
have been chosen out of $5000$ SLD sequences, the rate has decreased to
$3$~b/sym.

Fig.~\ref{fig:achievable_rate} shows the necessity of using error
correcting codes from two perspectives. First, a desired rate can be
achieved using a higher-order constellation along with an error
correcting code at a lower launch power. For example, while the uncoded
$(8,4)$-SQAM constellation with $n=3$ can achieve $4$~b/sym at a launch
power of $-9.5$~dBm, the same rate can be achieved with a $(10,4)$-SQAM
constellation with $n=3$ and an error correcting code of rate
$\frac{12}{13}$ at a launch power of $-12$~dBm; resulting in a coding
gain of about $2.5$~dBm. Secondly, at a fixed launch power, a higher
data rate can be achieved by using a higher-order constellation along
with an error correcting code. For example, at a launch power of
$-12$~dBm, the uncoded $(5,2)$-SQAM constellation with $n=4$ can achieve
a rate of $3$~b/sym, while by using the $(8,4)$ constellation with $n=3$
along with an error correcting code of rate $\frac{15}{16}$ one may
achieve $3.75$~b/sym. 

\begin{figure}[t]
\centering
\includegraphics[scale=0.6666666]{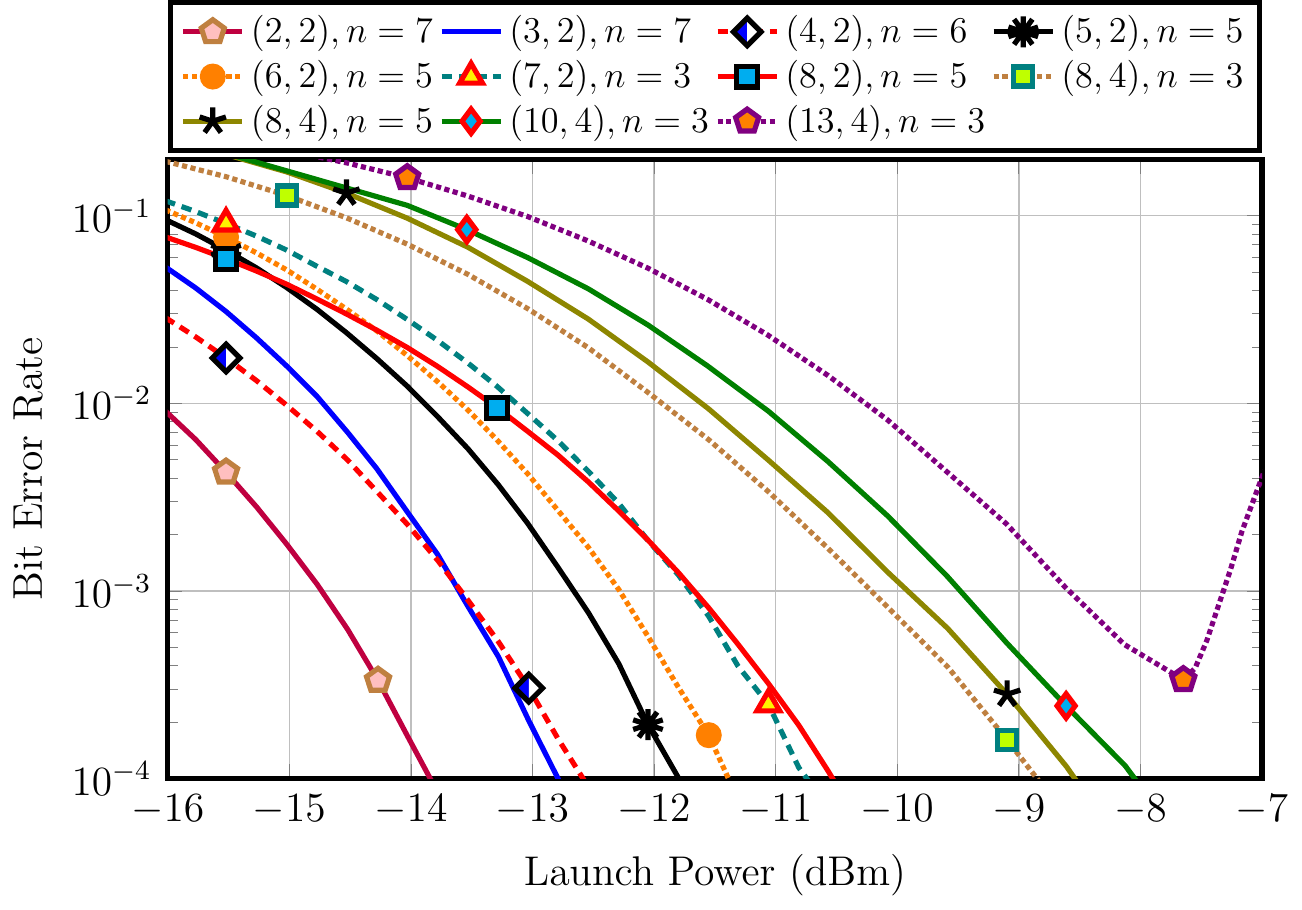}
\caption{BER curves for the constellations highlighted in 
Table~\ref{tab:ring_spacing} at $25$~Gbaud and with $1~$dBm laser power.}
\label{fig:ber_25G}
\end{figure}
\begin{figure}
\centering
\includegraphics[scale=0.6666666]{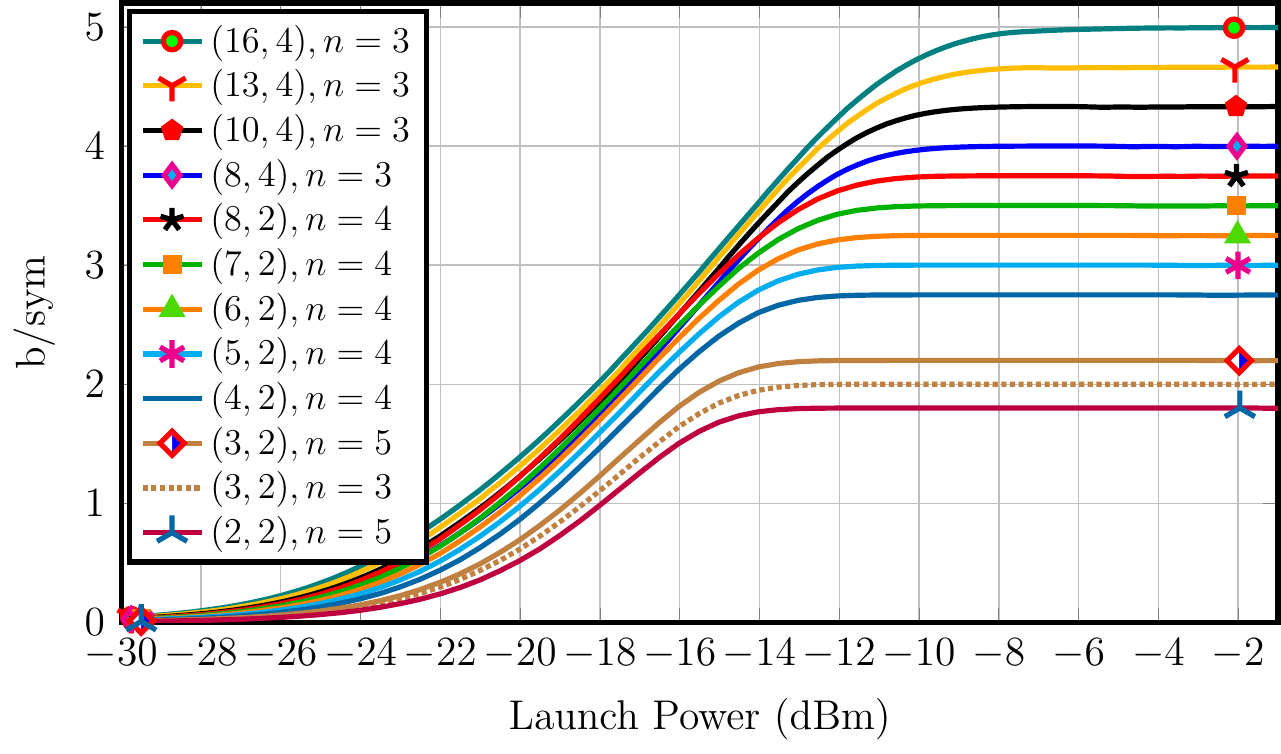}
\caption{Achievable data rate for different star-QAM constellations. 
Except $(16,4)$-SQAM constellation, other constellations are highlighted
in Table~\ref{tab:ring_spacing}, but with a possibly different block length. 
Curves are obtained at $50$~Gbaud and with $1$~dBm laser power.}
\label{fig:achievable_rate}
\end{figure}

Another fact which is shown in Fig.~\ref{fig:achievable_rate} is the
role of the block length $n$ on the performance. While the maximum
achievable rate for the $(3,2)$-SQAM constellation is $2$~b/sym with a
block length of $n=3$, it is $2.2$~b/sym with $n=5$. Note that, in
contrast to~\cite{tukey}, where brute-force search was used, the
increase in data rate resulting from  a larger $n$ does not come with a
huge increase in decoding complexity as the complexity of trellis
decoding is linear in $n$.

\end{subsection}

\begin{figure*}[t]
\centering
\subfloat[\label{fig:LO_ber_84}]{
\includegraphics[scale=0.6666666]{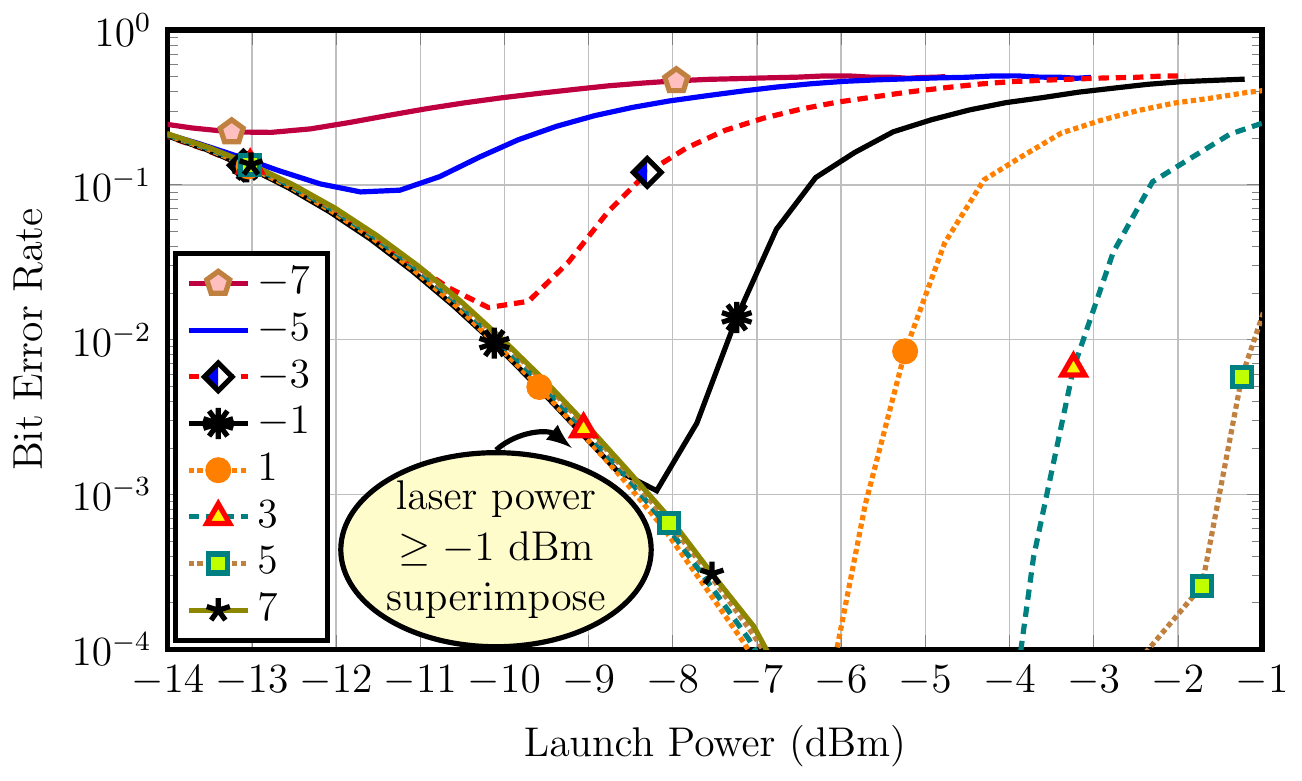}
}
\subfloat[\label{fig:LO_fail_84}]{
\includegraphics[scale=0.6666666]{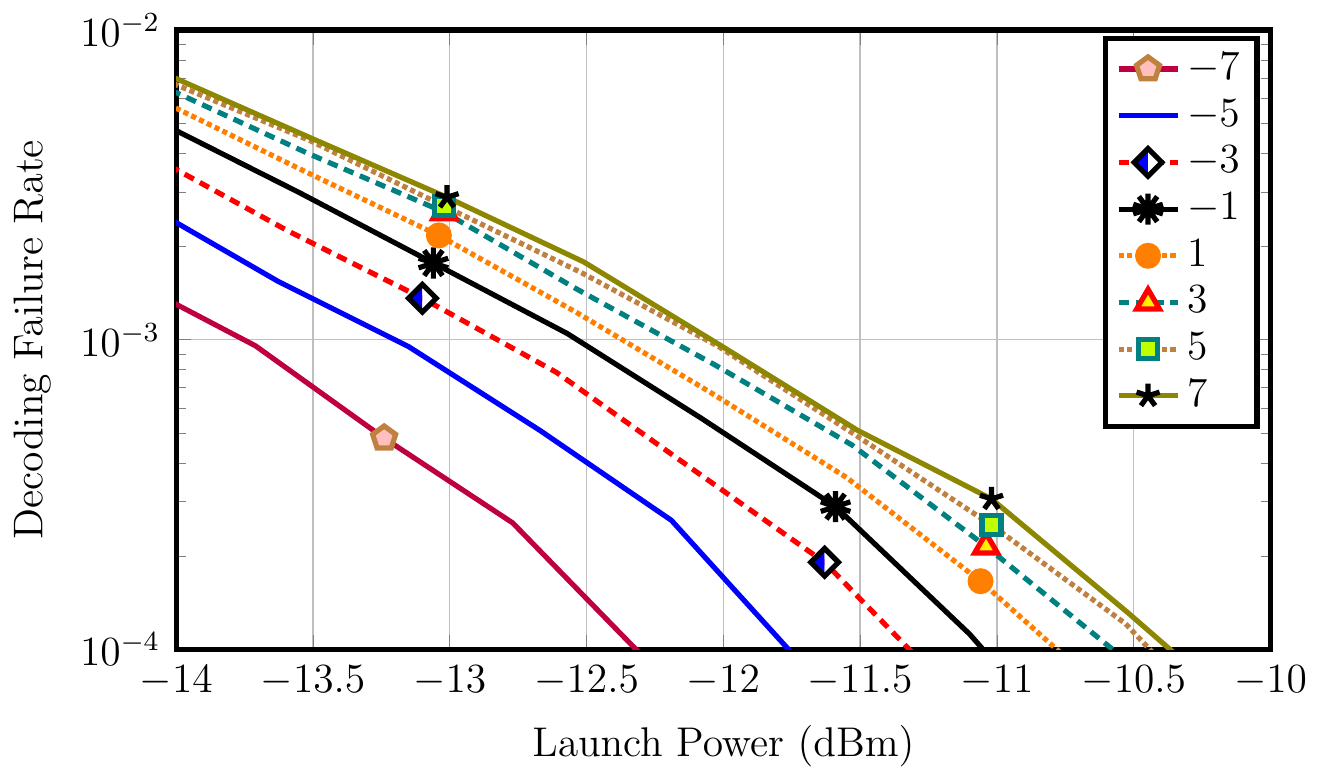}
}
\caption{The BER and decoding failure rate for the $(8,4)$-SQAM constellation with 
$n=5$, $\delta=0.2$ (see (\ref{eq:ring_set})), and at $50~$Gbaud,
at different laser powers. Legends
show the laser power in dBm.}
\label{fig:LO_84}
\end{figure*}

\begin{subsection}{Effect of Laser Power}
\label{subsec:laser_power}

Fig.~\ref{fig:LO_ber_84} shows the BER of the $(8,4)$-SQAM constellation
with $\delta=0.2$, as recommended by Table~\ref{tab:ring_spacing},
$n=5$, and at $50~$Gbaud. Note that the BER curves corresponding to
a laser power of $\geq -1~$dBm superimpose in the region of interest,
\textit{i.e.}, the region with BER about $10^{-3}$. However, the BER
performance degrades as laser power drops further. The reason for this
behavior is again in approximating (\ref{eq:driver_signal}) with
(\ref{eq:aprx_driver}). For a fixed launch power, a lower laser power
demands a higher modulating-waveform power and it violates the condition
for an accurate approximation of (\ref{eq:driver_signal}) by
(\ref{eq:aprx_driver}). 

From Fig.~\ref{fig:LO_ber_84} one may conclude that the laser power
determines the BER at which the roll-down part of the BER curve meets
the roll-up part.  In other words, the larger is the laser power, the
smaller is the BER at the transition point. For example, while the
roll-down to roll-up transition occurs at a BER of about $2\times
10^{-2}$ for a laser power of $-3~$dBm, it happens at a BER of $10^{-3}$
and $<10^{-4}$ for a laser power of $-1~$dBm and $1~$dBm, respectively.
Therefore, depending on the minimum required uncoded BER, one may
determine the minimum required laser power.

Fig.~\ref{fig:LO_fail_84} shows the decoding failure rate with the same
simulation setup as in Fig.~\ref{fig:LO_ber_84}. Interestingly, in
contrast to the BER curves, the smaller is the laser power, the smaller
is the decoding failure rate. However, the smaller decoding failure rate
comes with a higher bit error rate. From the spacing between the
decoding failure rate curves it is apparent that the curves approach a
limiting curve as the laser power increases.  However, note that launch
power values which result a target BER, \textit{e.g.}, about $10^{-3}$,
the decoding failure rate is almost negligible.

Fig.~\ref{fig:LO_ber_22} shows the BER curves for the $(2,2)$-SQAM
constellation with $\delta=0.2$ and block length $n=7$.  The maximum
achievable rate for this constellation and $n$ is $\frac{13}{7}$ b/sym.
By comparing Fig.~\ref{fig:LO_ber_84} with Fig.~\ref{fig:LO_ber_22} one
may see that increasing constellation size, thus rate, increases the
minimum required laser power to get a target BER.  For example, while we
can achieve a BER of $10^{-3}$ with a $-5~$dBm or even $-7$~dBm laser
power with the described $(2,2)$-SQAM constellation, the minimum
required laser power to achieve the same BER with the described
$(8,4)$-SQAM constellation is $-1~$dBm. 

In short-haul applications, \textit{e.g.}, intra data-center
communication, power consumed by lasers play are an important component
of the overall power budget.  One may decide to allocate less power to
lasers in the price of a lower data rate, in b/sym, or higher BER for a
fixed constellation.

\begin{figure}
\includegraphics[scale=0.66666666]{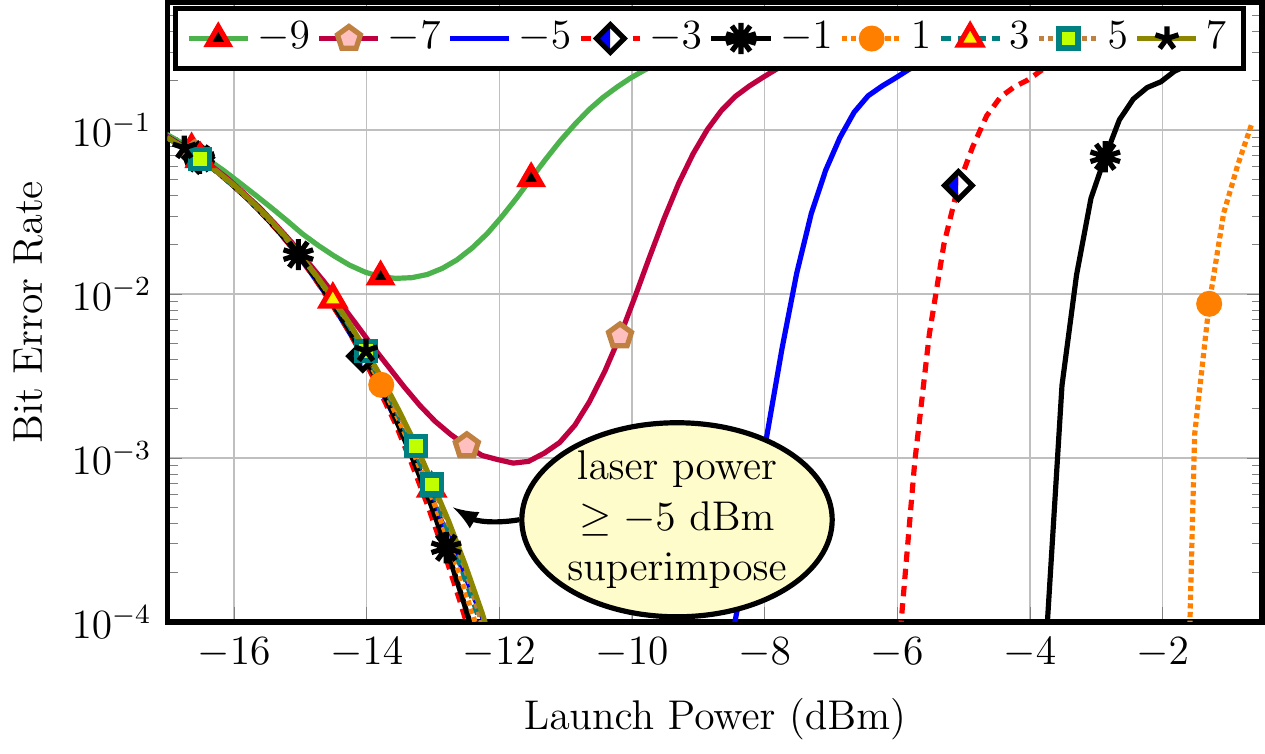}
\caption{The BER for the $(2,2)$-SQAM constellation with $n=7$ and 
$\delta=0.2$ (see (\ref{eq:ring_set})), and at $50~$Gbaud, at 
different laser powers. Legends show the laser power in dBm.}
\label{fig:LO_ber_22}
\end{figure}
\end{subsection}

\end{section}

\begin{section}{Comparing with IMDD}
\label{sec:compare}

In contrast to the majority of the literature, instead of experimental
results all of our figures of merit are obtained by computer-based
simulations.  There are many practical issues, \textit{e.g.}, splicing,
scattering, etc.,  which have been ignored in our numerical simulations.
Therefore, to have a fair comparison, we have simulated IMDD using the
same photodiode parameters that were used in our system. 

Fig.~\ref{fig:imdd_compare} shows the throughput of the proposed and the IMDD
schemes at $50~$Gbaud for a single wavelength and a single polarization.
While the $(8,4)$-SQAM Tukey signalling achieves a throughput of $200~$Gb/s at
a launch power of about $-10~$dBm, IMDD with PAM-$16$ achieves this rate at
about $0~$dBm; thus, by using the proposed scheme a again of about $8~$dB can
be achieved compared to the IMDD scheme.  We can achieve the same throughput by
using a $(16,4)$-SQAM constellation with $n=3$ and an error correcting code of
rate $0.8$ at a launch power about $-12.9~$dBm, as well, \textit{i.e.}, a
$2.9$~dB coding gain.  Furthermore, Fig.~\ref{fig:imdd_compare} shows that for
a fixed launch power and a fixed symbol rate one may achieve higher throughputs
by using the proposed scheme rather than IMDD.  For example, while a throughput
of $200$~Gb/s is achievable with the proposed scheme at a launch power of
$-10~$dBm and at $50~$Gbaud with an $(8,4)$-SQAM constellation of block length
$n=3$, one may achieve only about $145~$Gb/s with an IMDD scheme using PAM-$8$
constellation at the same symbol rate and the same launch power. Note that these
two constellations have the same number of magnitude levels; the proposed
scheme has a higher throughput since it is able to extract
phase information from
a complex-valued constellation.


\begin{figure}[t]
\centering
\includegraphics[scale=0.6666666]{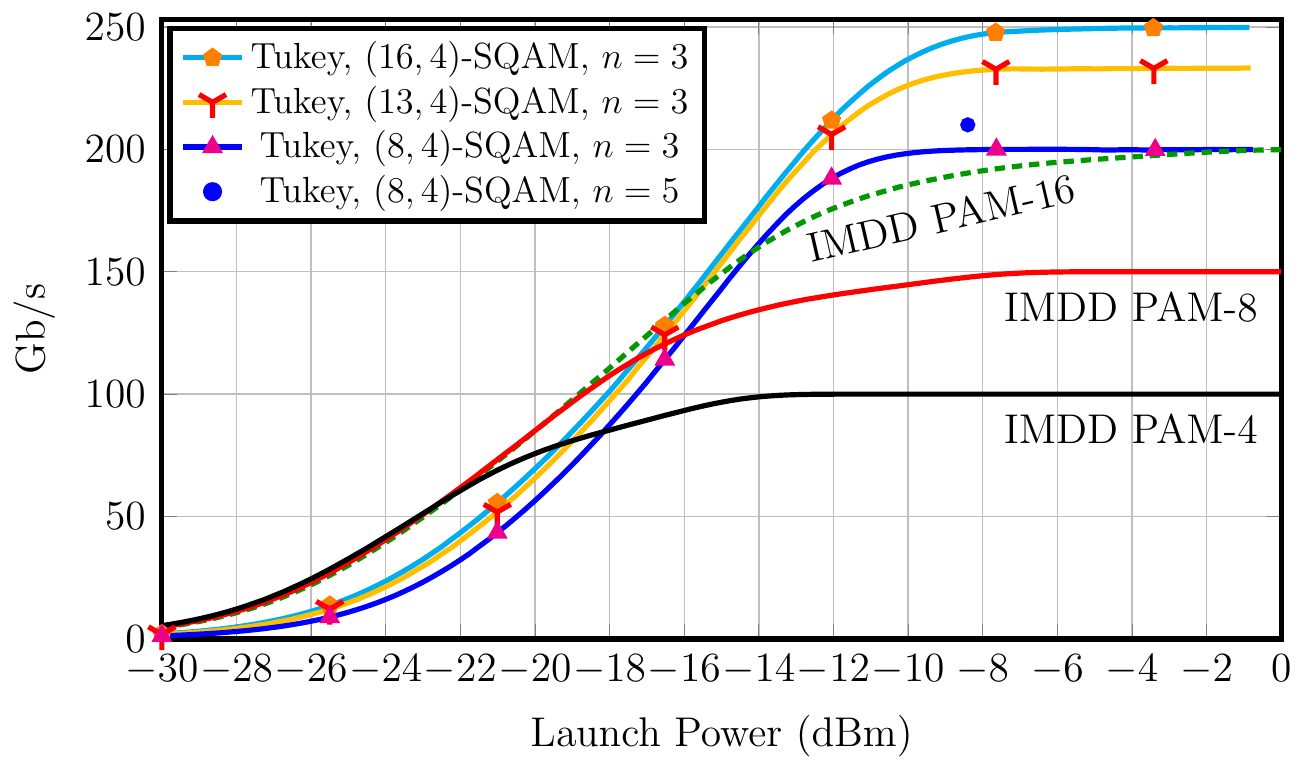}
\caption{Throughput for IMDD and the proposed scheme at $50~$Gbaud.}
\label{fig:imdd_compare}
\end{figure}

\end{section}


\begin{section}{O-band Operation}
\label{sec:Oband}

So far we have assumed operation in the C band according to the
system model shown in Fig.~\ref{fig:system_model}.
Because of the non-negligible chromatic dispersion
in this band and the difficulty of post-detection dispersion compensation
after signal detection using
a single photodiode, dispersion was (partially) precompensated at the
transmitter, as discussed in Sec.~\ref{subsec:dispersion_precompensation}.
However, pre-compensation of chromatic dispersion requires
knowledge of the link length, which is generally not possible without
a feedback channel.

To remedy this issue, one may operate in the O band,
near the zero-dispersion wavelength of
SSMF, omitting transmitter-side dispersion precompensation, and
making the transmitter agnostic to the transmission distance.
This changes the system model to that of
Fig.~\ref{fig:system_model_Oband}. The advantage of having a link-length
agnostic transmitter comes at the expense of having channel with
greater loss compared to a C-band channel.  In particular, the fiber loss
in O band is about $0.5~\text{dB}\cdot\text{km}^{-1}$. 

Fig.~\ref{fig:Oband-ber} shows the O-band BER for the constellations
highlighted in Table~\ref{tab:ring_spacing}, using a laser power of
$2~\text{dBm}$ and at a symbol rate of $50~\text{Gbaud}$. 
We do not assume operation with \emph{exactly} zero dispersion;
instead we simulate operation at two wavelengths 
assumed to have chromatic dispersions of $-1$ and
$1~\text{ps}\cdot\text{nm}^{-1}\cdot\text{km}^{-1}$, respectively,
leaving the channel with some uncompensated residual dispersion.

For a $10~\text{km}$ length of SSMF, the total power loss of the transmission
link in the O band is $3~\text{dB}$ more than the total power loss in the C
band.  We see this by comparing the roll-down parts of the curves in
Fig.~\ref{fig:Oband-ber} with their corresponding curves in
Fig.~\ref{fig:ber_50G}. For example, one may see from Fig.~\ref{fig:Oband-ber}
that the $(8,4)$-SQAM constellation with block length $n=3$ achieves a BER of
$10^{-3}$ at a launch power of $-5.8~\text{dBm}$ in the O band; while, from
Fig.~\ref{fig:ber_50G}, this constellation achieves the same BER at a launch
power of $-8.8~\text{dB}$ in the C band. 

Note that this ``$3~\text{dB}$ difference'' is violated for
the $(10,4)$- and $(13,4)$-SQAM constellations.
These large constellations achieve a target BER at a
higher launch power compared to the other smaller constellations highlighted in
Table~\ref{tab:ring_spacing}.  Furthermore, due to the greater loss
in the O band, a higher launch power is required to achieve a
target BER. However, as explained in Sec.~\ref{subsec:integrate_and_dump} and
for a fixed laser power, the higher launch power results in a higher modulator
nonlinearity. Thus, it is the nonlinearity of the modulator that
hinders the performance
of these large constellations in the O band.  This issue
may be remedied by operating at a higher laser power.

\begin{figure}
\includegraphics[scale=0.66666666]{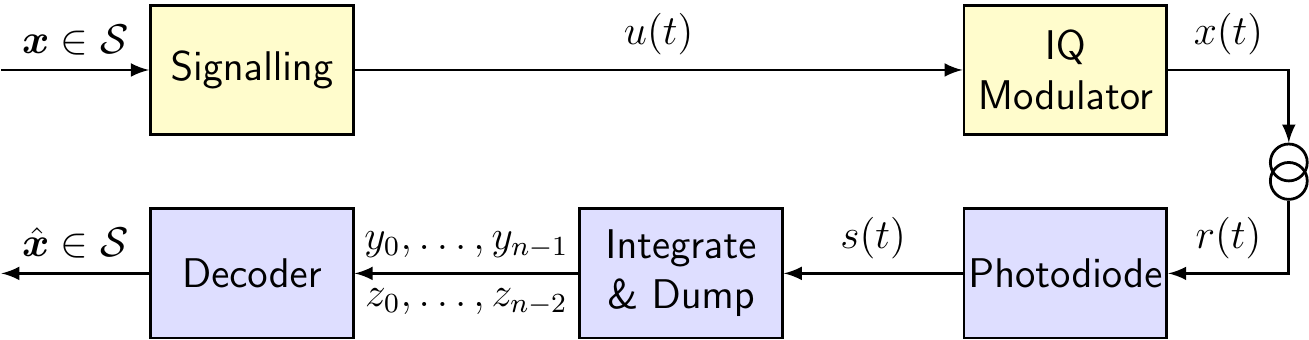}
\caption{The system model for operation in the O band.}
\label{fig:system_model_Oband}
\end{figure} 

\begin{figure}
\centering
\includegraphics[scale=0.6666666]{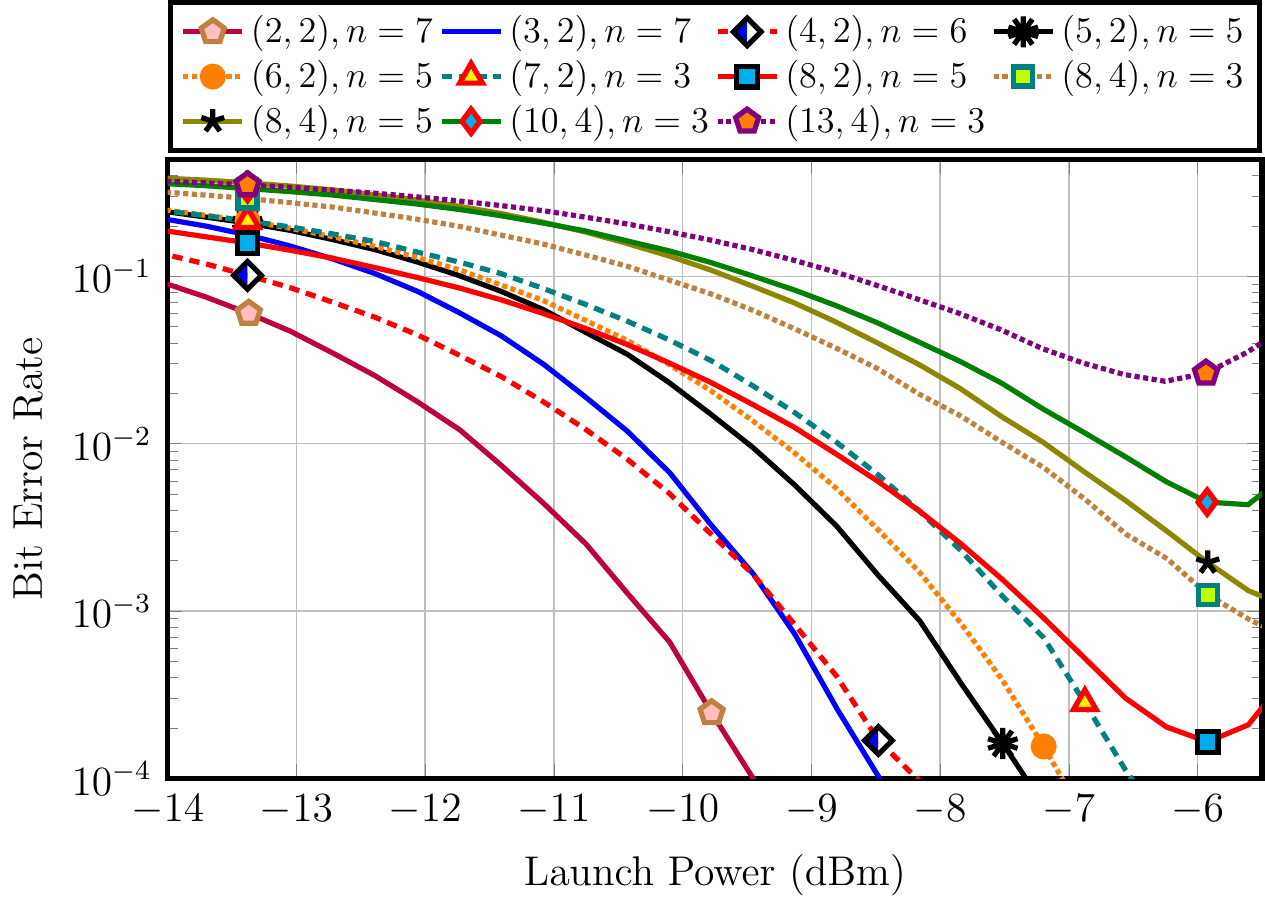}
\caption{The O-band BER  for the constellations highlighted in
Table~\ref{tab:ring_spacing}, at $50~$Gbaud, and at a laser power of
$2~\text{dBm}$. The chromatic dispersion in is
$1~\text{ps}\cdot\text{nm}^{-1}\cdot\text{km}^{-1}$.  Simulations
performed with a chromatic dispersion of
$-1~\text{ps}\cdot\text{nm}^{-1}\cdot\text{km}^{-1}$ gave
identical results.}
\label{fig:Oband-ber}
\end{figure}

\end{section}

\begin{section}{Discussion}
\label{sec:discussion}

In contrast to a long-haul or wireless communication system where the
power budget of the transmitter and the receiver are decoupled, in
short-haul applications the power budget is shared between the
transmitter and receiver.  As KK schemes use an unsynchronized laser
at the receiver---or send a tone along with the signal at
transmitter--- a fair comparison between KK and direct detection
under Tukey signalling or with IMDD should be made under a fixed
total power, \textit{i.e.}, including the receiver laser power (or
the transmitter tone), as well.  As a result, due to the high laser
power at the receiver (or high tone power at the transmitter), the KK
schemes are not power efficient for communication over short
distances, \textit{i.e.}, $<10~$km.  However, for longer distances,
\textit{e.g.}, about $100~$km, other imperfections of optical fiber
become substantial, necessitating post-detection compensation. In
those cases, by recovery of the complex-valued received waveform from
its intensity, KK receivers are a matter of interest.

In this paper, we addressed three practical issues which were ignored
in~\cite{tukey}. First, we addressed constellation design and
decoding complexity by introducing trellis diagrams for SQAM
constellations.  Second, have included the nonlinearity of the IQ
modulator, rather than assuming arbitrary waveform generation.
Third, we searched over various $n_r$, $n_p$, and $\delta$ values to
determine which constellations have best performance.

Comparisons with  IMDD show that at $50~$Gbaud and at a launch power
of $-10~$dBm, the proposed scheme achieves a throughput of $200~$Gb/s
using $(8,4)$-SQAM constellation, while at this launch power IMDD
achieves $145~$Gb/s using PAM-$8$.  This increase in the throughput
requires implementation of an IQ modulator at the transmitter and two
ADCs, each operating at the symbol rate, at the receiver.

\end{section}

\section*{Acknowledgment}
The authors would like to thank Prof. Anthony Chan Carusone, University of
Toronto, for helpful discussions.

\bibliographystyle{IEEEtran}
\bibliography{IEEEabrv,references}

\begin{thebibliography}{10}
\providecommand{\url}[1]{#1}
\csname url@samestyle\endcsname
\providecommand{\newblock}{\relax}
\providecommand{\bibinfo}[2]{#2}
\providecommand{\BIBentrySTDinterwordspacing}{\spaceskip=0pt\relax}
\providecommand{\BIBentryALTinterwordstretchfactor}{4}
\providecommand{\BIBentryALTinterwordspacing}{\spaceskip=\fontdimen2\font plus
\BIBentryALTinterwordstretchfactor\fontdimen3\font minus
  \fontdimen4\font\relax}
\providecommand{\BIBforeignlanguage}[2]{{%
\expandafter\ifx\csname l@#1\endcsname\relax
\typeout{** WARNING: IEEEtran.bst: No hyphenation pattern has been}%
\typeout{** loaded for the language `#1'. Using the pattern for}%
\typeout{** the default language instead.}%
\else
\language=\csname l@#1\endcsname
\fi
#2}}
\providecommand{\BIBdecl}{\relax}
\BIBdecl

\bibitem{tukey}
A.~Tasbihi and F.~R. Kschischang, ``Direct detection under {T}ukey
  signalling,'' \emph{J. Lightw. Technol.}, vol.~39, no.~21, pp. 6845--6857,
  Nov. 2021.

\bibitem{partial1}
A.~Lender, ``The duobinary technique for high-speed data transmission,''
  \emph{IEEE Trans. Commun. Electron.}, vol.~82, no.~2, pp. 214--218, May 1963.

\bibitem{partial2}
P.~Kabal and S.~Pasupathy, ``Partial-response signalling,'' \emph{IEEE Trans.
  Commun.}, vol.~23, no.~9, pp. 921--934, Sep. 1975.

\bibitem{kk1}
A.~Mecozzi, C.~Antonelli, and M.~Shtaif, ``Kramers-{K}ronig coherent
  receiver,'' \emph{Optica}, vol.~3, no.~11, pp. 1220--1227, 2016.

\bibitem{kk2}
------, ``Kramers-{K}ronig receivers,'' \emph{Adv. Opt. Photon.}, vol.~11,
  no.~3, pp. 480--517, 2019.

\bibitem{kk3}
T.~Bo and H.~Kim, ``Kramers-{K}ronig receiver operable without digital
  upsampling,'' \emph{Opt. Express}, vol.~26, no.~11, pp. 13\,810--13\,818,
  2018.

\bibitem{old_kk1}
K.~H. Powers, ``The compatibility problem in single-sideband transmission,''
  \emph{Proc. IRE}, vol.~48, no.~8, pp. 1431--1435, Aug. 1960.

\bibitem{old_kk2}
B.~Logan and M.~Schroeder, ``A solution to problem of compatible
  single-sideband transmission,'' \emph{IRE Trans. Inf. Theory}, vol.~8, no.~5,
  pp. 252--259, Sep. 1962.

\bibitem{old_kk3}
H.~B. Voelcker, ``Toward a uniform theory of modulation part {I}:
  Phase-envelope relationship,'' \emph{Proc. IEEE}, vol.~54, no.~3, pp.
  340--353, Mar. 1966.

\bibitem{old_kk4}
------, ``Toward a uniform theory of modulation part {II}: Zero manipulation,''
  \emph{Proc. IEEE}, vol.~54, no.~5, pp. 735--755, May 1966.

\bibitem{old_kk5}
------, ``Demodulation of single-sideband signals via envelope detection,''
  \emph{IEEE Trans. Commun.}, vol.~14, no.~1, pp. 22--30, Feb. 1966.

\bibitem{old_kk6}
G.~B. Lockhart, ``A spectral theory for hybrid modulation,'' \emph{IEEE Trans.
  Commun.}, vol.~21, no.~7, pp. 790--800, Jul. 1973.

\bibitem{papen2019}
G.~C. Papen and R.~E. Blahut, \emph{Lightwave Communications}.\hskip 1em plus
  0.5em minus 0.4em\relax Cambridge, UK: Cambridge Univ. Press, 2019.

\bibitem{munich_tukey}
D.~Plabst, T.~Prinz, T.~Wiegart, T.~Rahman, N.~Stojanovi\'{c}, S.~Calabr\`{o},
  N.~Hanik, and G.~Kramer, ``Achievable rates for short-reach fiber-optic
  channels with direct detection,'' \emph{J. Lightw. Technol.}, vol.~40,
  no.~12, pp. 3602--3613, Jun. 2022.

\bibitem{trellis1}
F.~R. Kschischang and V.~Sorokine, ``On the trellis structure of block codes,''
  \emph{IEEE Trans. Inf. Theory}, vol.~41, no.~6, pp. 1924--1937, Nov. 1995.

\bibitem{agrawal}
G.~P. Agrawal, \emph{Fiber--Optic Communication Systems}, 4th~ed.\hskip 1em
  plus 0.5em minus 0.4em\relax NJ, USA: John Wiley \& Sons, 2010.

\end{thebibliography}
\end{document}